\documentclass[aps,pra,reprint]{revtex4-2}
\usepackage{graphicx}
\usepackage[dvipsnames]{xcolor}
\usepackage{hyperref}
\usepackage{physics}
\hypersetup{
colorlinks=true,
citecolor=magenta}
\usepackage{amsmath}
\usepackage{amssymb}
\usepackage{amsthm}
\theoremstyle{remark}
\newtheorem{theorem}{Theorem}
\newtheorem{proposition}[theorem]{Proposition}
\newtheorem{lemma}[theorem]{Lemma}

\newtheorem{remark}[theorem]{Remark}

\usepackage{tikz}
\usepackage{quantikz}

\makeatletter
\DeclareRobustCommand{\equalcontribution}{%
  \frontmatter@footnote{These authors contributed equally to this work.}}
\makeatother

\begin{document}
\title{Quantum Codeword Sensing}
\author{John Blue\equalcontribution}\affiliation{Department of Physics, Massachusetts Institute of Technology, Cambridge, Massachusetts 02139, USA}
\author{Zachary E. Chin\equalcontribution}\email{zchin@mit.edu}\affiliation{Department of Physics, Massachusetts Institute of Technology, Cambridge, Massachusetts 02139, USA}
\author{Vlad S. Oros}\affiliation{Department of Physics, Massachusetts Institute of Technology, Cambridge, Massachusetts 02139, USA}
\author{Isaac L. Chuang}\affiliation{Department of Physics, Massachusetts Institute of Technology, Cambridge, Massachusetts 02139, USA}
\date{\today}
\begin{abstract}
    Consider the task of identifying an unknown classical codeword delivered to a register of quantum sensor qubits through a sequence of weak phase interactions that each encode one bit. Suppose that each received bit controls an identical $Z$-rotation of angle $\theta$ applied to every sensor qubit; receivers may also use auxiliary qubits within a total qubit budget $Q$, with arbitrary control unitaries and entanglement among all qubits allowed during each reception step. When the total collected signal is weak ($Q\theta \rightarrow 0$), we establish that a sensor coupled to a persistent quantum memory can identify the original codeword more reliably than any receiver with the same $Q$ and $\theta$ but only classical memory between received bits, even when bit flips occur during transmission. In particular, for a family of $[n,k]$ binary simplex codes with even $k$ and $n=2^k-1$, we achieve exact correction of up to $t\leq(k-2)/4$ transmission errors by coupling the sensor register to a persistent quantum memory of the same size, using $Q = \Theta((t+1)\log n)$ and $\theta=\frac{2\pi}{k\sqrt{n+1}}$; for $t=o(\sqrt k)$, every comparable receiver with only classical memory instead has average success probability (over uniformly chosen codewords) tending to zero. Furthermore, under independent bit flips of any fixed probability $p<1/16$, our receiver with persistent quantum memory achieves decoding error tending to zero as $n\to\infty$ using $Q=O(\log n\log\log n)$ total qubits and the same $\theta$. In contrast, every otherwise equivalent receiver with only classical memory has vanishing average success probability under the same stochastic error model.
    
\end{abstract}
\maketitle

\section{Introduction}
\label{sec:intro}
Suppose a classical message is encoded into a binary codeword whose bits arrive sequentially at a quantum sensor. Each received zero results in no action, and each received one applies a \(Z\) rotation of angle \(\theta\) to every sensor qubit. The receiver's task is to determine the original message, even when some of the bits may be flipped during transmission. We call this task \emph{quantum codeword sensing}. 


Quantum codeword sensing models a situation that arises whenever a quantum sensor must read a classical signal that is structured in time but couples only weakly to the sensor. For example, a qubit dispersively coupled to a microwave cavity picks up a phase that depends on whether a photon is present. Single-photon detectors engineer this phase to equal $\pi$~\cite{kono2018itinerantphoton,dixit2021darkmatter}, but it is generally smaller when the coupling strength or interaction time is limited. Likewise, spin qubits such as nitrogen-vacancy centers or trapped ions pick up a phase proportional to an applied magnetic-field pulse~\cite{degen2017quantumsensing}, and Rydberg-atom receivers already decode amplitude-modulated microwave carriers as digital communication channels~\cite{meyer2018digitalcommunication,fancher2021rydbergsensors}. In each case an on--off keyed signal fits the model above exactly: a received one applies a collective $Z$ rotation to the sensor register, and a received zero does nothing. The angle $\theta$ is set by the physics of the coupling, such as the distance to the source, the transmitted power, or the dispersive shift. It may lie far below $\pi$, so that no single arrival can be resolved reliably. The message must then be spread redundantly across many arrivals, and a classical error-correcting code does this naturally while also protecting against bits that are flipped before they reach the sensor. A closely related spatial problem is quantum trajectory sensing~\cite{PhysRevLett.134.210802, chin2025trajectorysensing}, in which particles crossing a sensor array leave phases whose pattern must be identified in a single shot. Quantum codeword sensing can be viewed as its temporal counterpart: the pattern is spread over time slots instead of over sensor locations.

We may equip the receiver with different levels of quantum capabilities.
At the most basic level, it can prepare and measure its qubits individually, retaining only classical measurement outcomes between arrivals. A more capable receiver may prepare entangled states and perform joint measurements within each time slot, but still only carry classical information between slots. Finally, a receiver with a persistent quantum memory may preserve quantum information throughout the protocol and process signals coherently as they arrive. When \(\theta = \pi\), a single qubit can determine the received bit with certainty, and the problem reduces to classical decoding. In this work, we are therefore interested in understanding, at a fixed total qubit count \(Q\) and small total strength \(\Phi = Q\theta\), the extent to which these different capabilities improve decoding performance. In particular, we might expect that persistent quantum memory could offer an advantage over receivers that can use arbitrary entangled probes and joint measurements within each time slot, but only carry classical information between time slots.
%
It is natural to anticipate that increasing the quantum resources should allow for increased decoder success probability. For example, by preparing a GHZ state, entanglement can readily improve a receiver's ability to determine each incoming bit when \(\theta\) is small~\cite{degen2017quantumsensing}.


Beyond enhancing sensitivity to individual signals, quantum resources can also be used to \emph{process} signals coherently before they are measured. This idea, recently termed quantum computational sensing~\cite{khan2025qcsadvantage,khan2026qcsqsp,sen2026qcsmagnetic}, starts from the fact that measuring a sensor yields only limited classical information, at most one bit per qubit. When the goal is a decision about the signal rather than an estimate of it, it can therefore be worthwhile to compute on the sensor's quantum state before reading it out~\cite{sarovar2023computationalimaging}. Quantum signal processing interleaves signal interactions with control rotations to implement nonlinear functions of an unknown phase, which enables single-shot binary decisions~\cite{sinanan-singh2024singleshotqsp,khan2026qcsqsp}; this idea has already been used to discriminate quantum channels in a trapped-ion experiment~\cite{debry2023channeldiscrimination}. Similarly, Grover-type search driven directly by an unknown oscillating field can detect that field faster than protocols built from conventional sensing and classical post-processing~\cite{allen2025quantumcomputingenhanced}. In parallel, results in quantum learning theory show that experiments which store quantum data in a quantum memory and process them coherently can be exponentially more efficient than experiments that measure each sample as it arrives~\cite{aharonov2022quantumalgorithmicmeasurement,chen2022exponentialseparations,huang2022quantumadvantagelearning}. Experimentally, an auxiliary nuclear-spin memory has been shown to improve the sensitivity of a solid-state spin sensor~\cite{zaiser2016quantummemory}. Together, these results suggest that what matters is not only whether a sensor is entangled, but whether quantum information is preserved and processed between interactions. This raises the question: when a classical codeword is delivered through a stream of weak signal interactions, can a receiver with a persistent quantum memory outperform every receiver that carries only classical information from one arrival to the next, however much entanglement it uses within each time slot?

In this work, we answer this question in the affirmative. More specifically, we construct receivers for binary simplex codes~\cite{eczoo_simplex} which use coherent interactions between a quantum memory of size \(k = \log (n+1)\), and a sensor of the same size, to accumulate the small phases imparted by the rotations into orthogonal memory states. We then show that reliable decoding remains possible in the presence of errors. By deriving bounds on all receivers retaining only classical information between arrivals, we establish parameter regimes such that in the limit of large code size, our receiver's success probability approaches one while any classical memory receiver's approaches zero at the same qubit count and rotation angle. We further evaluate performance for finite \(k\) ranging from 6 to 16, using a rotation angle of \(\theta = 2\pi/((Q-k) \cdot 2^{k/2})\), finding that a receiver with \(Q = 40\) qubits can outperform a concrete non-adaptive classical memory strategy on every tested simplex code at small transmission error rates \(p < 0.01\).  And with \(Q = 80\) qubits, our receiver can outperform any classical memory receiver for even \(k\) between \(12\) and \(16\), again with transmission error rates below \(1\%\).

Two particular pioneering lines of work foreshadowed the scenario of quantum codeword sensing.
Perhaps most directly relevant is the task of quantum reading \cite{pirandola2011quantumreadingclassical,pirandola2011quantumreadingcapacity,das2019quantumreading}, in which a classical codeword is encoded into a sequence of channels. A receiver prepares a quantum state and then may apply each of the channels once to it, before making a measurement to determine the codeword. Indeed, the protocols we describe in this work may be considered examples of quantum reading, and it is known that quantum reading is assisted by the use of entangled states~\cite{pirandola2011quantumreadingclassical} and collective measurements~\cite{pirandola2011quantumreadingcapacity}. Quantum reading was initially motivated by the scenario of reading data from a classical digital memory. With instead a communication setting in mind, we specialize this framework to a stream of weak signal interactions that repeatedly act on the same sensor register, with a particular interest in receivers whose total qubit budget is too small to retain a separate quantum system for every received symbol.

This sequential mode of access has a direct antecedent in quantum data gathering~\cite{blume-kohout2013quantumdata}, which considers, among other tasks, discrimination between $2^k$ possible pure product states. Optimal discrimination can require a collective measurement, yet the authors show that such a measurement can be implemented by sequentially interacting each subsystem with a single $2^k$-dimensional quantum memory, and then performing a final measurement on that memory. This demonstrates that the information needed for discrimination can be gathered coherently without retaining every subsystem. Our construction for the simplex code realizes this principle by accumulating the effects of weak signal interactions into orthogonal memory states for distinct codewords.

The structure of the paper is as follows. We begin by formally describing the quantum codeword sensing problem and proving bounds on the success probability of receivers with only classical memory (Section~\ref{sec:describe-problem}). We then describe our protocol for the simplex code for a receiver with quantum memory (Section~\ref{sec:describe-protocol}), prove its advantage over any classical memory receiver and investigate its performance at small \(k\) (Section~\ref{sec:theorems}), 
before concluding (Section~\ref{sec:conclusion}). In Appendix~\ref{sec:deferred-proofs} we collect deferred proofs.

\section{The Quantum Codeword Sensing Problem}
\label{sec:describe-problem}
In this section we formally define the quantum codeword sensing problem and define three classes of receivers. We then derive bounds on the maximum success probabilities of receivers equipped only with classical memories. 

\begin{figure*}[ht]
  \centering
  \includegraphics[width=1.0\linewidth]{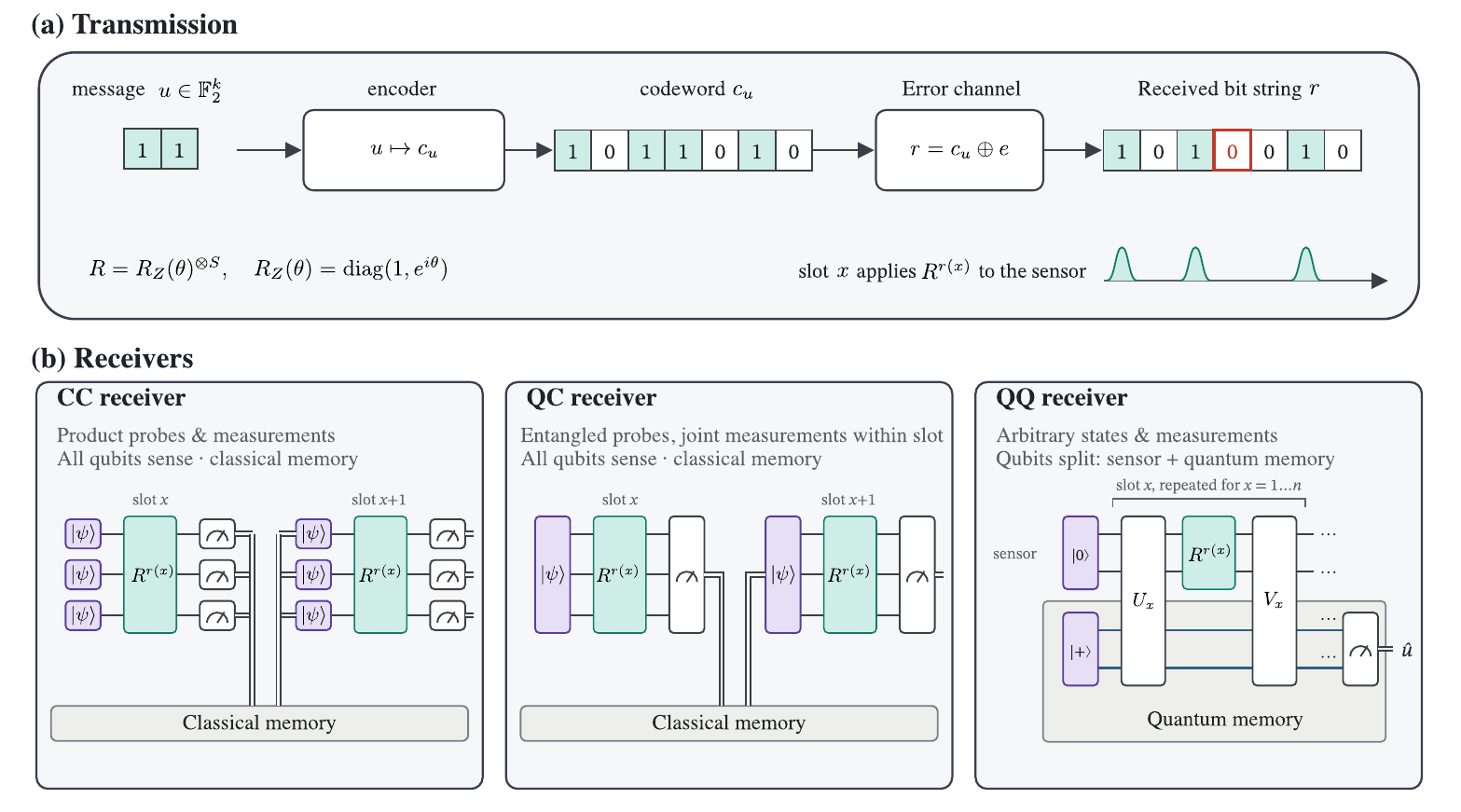}
  \caption{Quantum codeword sensing and receiver classes. (a) A message \(u\) is encoded as a classical codeword \(c_u\), which is corrupted by transmission errors \(e\). The received bits \(r=c_u\oplus e\) arrive sequentially: in time slot \(x\), the bit \(r(x)\) applies \(R^{r(x)}\) to the sensor, where \(R=R_Z(\theta)^{\otimes S}\) acts on its \(S \leq Q\) qubits. (b) Receivers are compared at the same total qubit budget \(Q\) and signal strength \(\theta\). CC receivers use product probes and product measurements; QC receivers may use entangled probes and joint measurements within each slot. Both retain only classical information between slots and prepare fresh probes, with preparations and measurements allowed to depend on previous outcomes. QQ receivers allocate qubits to sensing and persistent quantum memory. The illustrated QQ circuit uses controls \(U_x\) and \(V_x\) to couple the sensor and memory before and after each signal interaction, preserving quantum information across arrivals for a final measurement and estimate \(\hat u\) of the original message.}
  \label{fig:setup}
\end{figure*}

The setup of quantum codeword sensing is depicted in Fig.~\ref{fig:setup}. Let \(C \subset \mathbb{F}_2^n\) be an \([n, k, d]\) linear code over \(\mathbb{F}_2\), that is, \(C\) is a \(k\)-dimensional subspace of \(\mathbb{F}_2^n\), and the minimum Hamming weight of any nonzero element of \(C\) is \(d\). For each message \(u \in \mathbb{F}_2^k\), there is an associated codeword \(c_u \in C\). We assume a sender picks messages \(u\) uniformly at random, and sends the bits of the corresponding codeword over a potentially noisy channel one-by-one. For \(x \in [n]\), let \(c_u(x)\) denote the \(x^{\rm th}\) bit of \(c_u\), and let \(e(x)\), for \(e \in \{0, 1\}^n\), indicate the presence of an error during transmission. Let \(r = c_u \oplus e\) denote the bitstring after the transmission channel. At time slot \(x\), the sensor register with \(S\) qubits is acted on by the unitary \(R^{r(x)}\), where we call \(\theta\) the \emph{signal strength}, and \(R = R_Z(\theta)^{\otimes S}\), with
\begin{equation}
    R_Z(\theta) = \begin{pmatrix}
        1 & 0 \\
        0 & e^{i\theta}
    \end{pmatrix}.
\end{equation}
Given a computational basis state \(\ket{z}\), \(R\ket{z} = e^{i \theta |z|}\ket{z}\), where \(|z|\) denotes the Hamming weight of the bitstring \(z\).

A receiver is equipped with \(Q\) total qubits which  may be allocated towards sensing and memory. During the protocol, the receiver performs some sequence of unitary operations and measurements, producing some transcript \(y\). The  receiver then runs an algorithm which takes in \(y\), and outputs a predicted message \(\hat{u}\). The receiver succeeds if \(\hat{u} = u\).

We consider three possible combinations of sensor and memory capabilities. 
\begin{enumerate}
    \item A CC receiver (standing for ``classical-classical") may only maintain a classical memory between time slots, and thus dedicates all \(Q\) qubits towards sensing, and must re-initialize the state of its qubits at the beginning of each time slot. Additionally, a CC receiver may only prepare product states, and make product measurements. We allow the state preparations and measurements to be adaptive.
    \item A QC receiver (standing for ``quantum-classical") may also only maintain a classical memory between time slots (and thus must re-initialize its qubits at the start of each slot), and so we assume it dedicates all of its qubits towards sensing, but may prepare arbitrary states and make arbitrary measurements. Again, these states and measurements may be adaptive.
    \item Finally, a QQ receiver (standing for ``quantum-quantum") may prepare arbitrary quantum states, make arbitrary measurements, and may also maintain a persistent quantum memory (not exposed to the signal) throughout the entirety of the protocol which can interact coherently with the sensor. A QQ receiver must decide how to allocate its \(Q\) qubits between sensing and memory.
\end{enumerate}
We call \(\Phi = Q\theta\) the \emph{total strength} of a protocol. For the rest of this paper, unless otherwise stated, we assume \(\Phi \in [0, \pi]\).

Before continuing, we briefly collect notation that will be used throughout the paper. We use \(h_2(x) = -x \log(x) - (1-x)\log (1-x)\) to denote the binary entropy, where \(\log\) indicates the base two logarithm. For a positive integer \(n\), we use \([n]\) to denote the set \(\{1, \dots, n\}\). For an integer \(x \in \{0, \dots, n\}\), we often abuse notation and use the same symbol to refer to the integer, and its binary representation on \(\lceil \log (n+1) \rceil\) bits. Table \ref{tab:common-symbols} lists some commonly used symbols with their definitions.
\begin{table}[t]
\caption{Commonly used symbols}
\label{tab:common-symbols}
\centering
\begin{ruledtabular}
\begin{tabular}{lp{0.68\columnwidth}}
Symbol & Definition \\
\hline
\(Q\) & Total number of receiver qubits \\
\(S\) & Number of qubits per sensor register \\
\(B\) & Number of memory--sensor register pairs \\
\(\theta\) & Signal strength \\
\(S\theta\) & Interaction strength \\
\(\Phi=Q\theta\) & Total strength \\
\(n\) & Codeword length \\
\(k\) & Message length \\
\(u\) & Message \\
\(c_u\) & Codeword corresponding to message \(u\) \\
\end{tabular}
\end{ruledtabular}
\end{table}

\subsection{General Bounds}
\label{sec:cc-qc-bounds}
We now derive upper bounds on the success probabilities of CC and QC receivers that hold for any code. First, a small lemma.
\begin{lemma}
  \label{lem:overlap}
  For any \(m\) qubit state \(\ket{\psi}\) and angle \(\theta\), with \(m\theta \in[0, \pi]\),
  \begin{equation}
    |\langle \psi | R_Z(\theta)^{\otimes m} | \psi \rangle | \ge \cos \left ( \frac{m\theta}{2} \right ).
  \end{equation}
\end{lemma}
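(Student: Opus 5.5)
Expand \(\ket{\psi}=\sum_z a_z\ket{z}\) in the computational basis and write \(p_z=|a_z|^2\). Since \(R_Z(\theta)^{\otimes m}\ket{z}=e^{i\theta|z|}\ket{z}\), the overlap becomes
\begin{equation}
\langle\psi|R_Z(\theta)^{\otimes m}|\psi\rangle=\sum_{w=0}^{m} q_w e^{i\theta w},
\end{equation}
where \(q_w=\sum_{|z|=w}p_z\) is a probability distribution on \(\{0,\dots,m\}\). The lemma therefore reduces to a statement about convex combinations of points on the unit circle: the points \(e^{i\theta w}\) for \(w=0,\dots,m\) all lie on an arc from angle \(0\) to angle \(m\theta\le\pi\). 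I need to show that any convex combination of them has modulus at least \(\cos(m\theta/2)\).

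To do this, multiply by the global phase \(e^{-im\theta/2}\), which does not change the modulus. The points become \(e^{i\phi_w}\) with \(\phi_w=\theta w-m\theta/2\in[-m\theta/2,\,m\theta/2]\subseteq[-\pi/2,\pi/2]\). Then
\begin{equation}
\Big|\sum_w q_w e^{i\theta w}\Big|\ \ge\ \mathrm{Re}\sum_w q_w e^{i\phi_w}=\sum_w q_w\cos\phi_w\ \ge\ \cos\!\left(\frac{m\theta}{2}\right).
\end{equation}
The last inequality holds because cosine is even and decreasing on \([0,\pi/2]\), so \(\cos\phi_w\ge\cos(m\theta/2)\) for every \(w\). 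The \(q_w\) then sum to one.

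The only delicate point is the hypothesis \(m\theta\in[0,\pi]\). It ensures that the arc spans at most \(\pi\), so the recentered angles stay within \([-\pi/2,\pi/2]\), where \(\cos\) is monotone in \(|\phi|\). The bound is tight: an equal superposition of \(\ket{0^m}\) and \(\ket{1^m}\) (a GHZ state) gives modulus exactly \(|\cos(m\theta/2)|\). I expect no real obstacle beyond stating this reduction cleanly.
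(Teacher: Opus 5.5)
Your proof is correct and takes essentially the paper's approach: both reduce the overlap to a convex combination of unit-circle points on an arc of length at most $\pi$. Your rotation by $e^{-im\theta/2}$ followed by taking the real part rigorously justifies the step the paper only asserts, namely that the modulus is minimized by the equal mixture of the extreme phases $0$ and $m\theta$.
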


\begin{proof}
    Let \(\ket{\psi} = \sum_{z} \alpha_z \ket{z}\), so that \(\langle \psi|R_Z(\theta)^{\otimes m}|\psi\rangle = \sum_z |\alpha_z|^2 e^{i\theta |z|}\). This is a convex combination of points lying on the upper half of the unit circle in the complex plane, and thus is minimized for an equal sum of the two extreme phases \(0\) and \(m\theta\), and so
    \begin{align}
     |\langle \psi | R_Z(\theta)^{\otimes m} | \psi \rangle | &\ge \frac{1}{2} |e^{im\theta} + 1| \\
     &= \cos \left (\frac{m\theta}{2} \right )
    \end{align}
    as desired.
\end{proof}

Now, we show that under very mild constraints on \(\theta\) and \(\Phi\), a perfect CC or QC receiver is impossible.
\begin{proposition}
  \label{prop:qc-perfect-recovery}
    For any code with \(k \ge 1\), perfect recovery with a QC receiver is impossible if \(\Phi < \pi\), even in the absence of transmission errors. For a CC receiver, perfect recovery is impossible if \(\theta < \pi\).
\end{proposition}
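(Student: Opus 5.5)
We prove both claims by exhibiting, for any adaptive strategy, two distinct messages whose transcripts cannot be perfectly distinguished. The plan is to reduce perfect recovery to perfect distinguishability in every time slot where two messages differ, and then use Lemma~\ref{lem:overlap} to show that no single slot allows this.

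Since \(k\ge 1\), the code has a nonzero codeword, so some coordinate \(x\) has \(c_u(x)\neq c_{u'}(x)\) for some pair \(u,u'\). We set \(e=0\). Consider the first slot \(x^*\), in the natural ordering, at which some pair of messages differs. Before \(x^*\) all codewords agree, so the transcript distribution up to slot \(x^*\) is identical for all messages. In particular, both values of the bit \(b=r(x^*)\) occur among messages of positive probability.

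At slot \(x^*\), conditioned on any fixed prior transcript, the receiver prepares a fresh state \(\rho\) on its \(Q\) qubits (allowing a classical-randomness mixture, or purifying with classical ancilla randomness, which is fine since the prior transcript only selects a pure state plus a measurement). The state after the signal is either \(\rho\) or \(R\rho R^\dagger\). Everything after slot \(x^*\) is a classical function of this slot's outcome plus later slots. Because the receiver re-initializes at each slot, later slots carry no quantum information about \(b\).

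The key step is the following claim: if the outcome distributions for \(b=0\) and \(b=1\) are not disjoint, then perfect recovery fails. Suppose some outcome \(y\) has positive probability under both. Then the continuation from \(y\) proceeds identically, in law, for every message consistent with the prior transcript. We need two messages \(u,u'\) that differ at \(x^*\) and yield the same final transcript distribution overlap. Take one message with \(b=0\) and one with \(b=1\). The remaining slots produce transcripts whose supports may differ, since later bits may differ. So I would instead apply the argument to a stronger one-slot statement: the joint distribution of all outcomes factorizes into slot-wise kernels, and perfect recovery requires the transcript supports of distinct messages to be disjoint. For two messages differing only at positions where they differ, supports are disjoint only if at some slot the kernels for the two bit values have disjoint supports given a common reachable history. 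Choosing the pair differing first at \(x^*\), that common history exists.

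Perfect discrimination at a slot means \(\rho\) and \(R\rho R^\dagger\) are orthogonal. For pure \(\ket\psi\) this requires \(|\langle\psi|R|\psi\rangle|=0\), but Lemma~\ref{lem:overlap} with \(m=Q\) gives \(|\langle\psi|R|\psi\rangle|\ge\cos(Q\theta/2)=\cos(\Phi/2)>0\) when \(\Phi<\pi\). Mixtures of pure states are handled by fidelity concavity, or simply because every pure component must itself be perfectly discriminated. This settles the QC case.

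For CC, the state is a product \(\bigotimes_j\ket{\psi_j}\). The overlap factorizes as \(\prod_j\langle\psi_j|R_Z(\theta)|\psi_j\rangle\), and each factor is at least \(\cos(\theta/2)>0\) for \(\theta<\pi\), by Lemma~\ref{lem:overlap} with \(m=1\). Mixed product states are handled as before. Note that \(Q\theta\) may exceed \(\pi\) here; only \(\theta<\pi\) is needed.

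The main obstacle is making the adaptive-support argument rigorous. Specifically, we must show that two messages differing in possibly many later slots still have overlapping transcripts. I would handle this by induction over their differing slots: at each such slot the kernels overlap, so we pick a common outcome and continue. At slots where the two messages agree, the kernels are identical. This yields a common full transcript of positive probability under both messages, so no decoder succeeds with certainty on both.
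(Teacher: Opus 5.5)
Your final argument, the forward induction in your last paragraph, is correct and is essentially the paper's proof. Both rest on two facts: given a common classical history the receiver prepares the same state and applies the same measurement, and the two possible post-signal states are then either identical or non-orthogonal by Lemma~\ref{lem:overlap}, so some outcome has positive probability under both messages. The only difference is direction: the paper runs the induction backward, showing the message would have to be determined before the last slot, then before each earlier slot, down to before the first. You run it forward, building one full transcript of positive probability under a fixed pair $u\neq u'$. In a write-up, drop the first-slot $x^*$ detour and present only that forward induction.
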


\begin{proof}
  Suppose a codeword is transmitted with no errors. Consider a quantum codeword sensing protocol with a QC or CC receiver which does achieve perfect recovery. Let \(h\) be the classical information available prior to the last time slot, and suppose there are two distinct messages \(u\) and \(u'\) such that \(\Pr[u | h] \ne 0\) and \(\Pr[u' | h] \ne 0\). Thus, the final measurement must distinguish them with certainty. Since \(h\) is the same in both scenarios, the initial state prepared by the receiver is identical, call it \(\ket{\psi_0}\) (a protocol involving a mixed state preparation can be implemented by randomly sampling a pure state and retaining its classical label as part of \(h\); ignoring this label reproduces the original protocol). If the final bits of \(c_u\) and \(c_{u'}\) agree, then the states are identical, and perfect discrimination is not possible. If the bits differ, then one of the messages results in the state \(R\ket{\psi_0}\), and in the QC case, Lemma~\ref{lem:overlap} gives that
\[|\langle \psi_0 | R |\psi_0\rangle | \ge \cos(\Phi/2)\]
and perfect discrimination is still not possible if \(\Phi < \pi\). Thus, the message must have already been determined prior to the final time slot. Repeating this argument implies that the message was determined prior to the first time slot, which implies that the code only contains a single codeword, a contradiction.

The argument is almost identical in the CC case, except now, \(\ket{\psi_0}\) is a product state, and Lemma~\ref{lem:overlap} gives that
\[|\langle \psi_0 | R |\psi_0\rangle | \ge \cos^Q(\theta/2)\]
and the states are not perfectly distinguishable when \(\theta < \pi\).
\end{proof}

As we will see in the following section, there are codes for which a QQ receiver may succeed with probability one, even when \(\Phi \ll \pi\). While this already demonstrates an advantage for QQ receivers, it's not necessarily a terribly interesting one, as a CC or QC receiver might still succeed with probability close enough to one to be indistinguishable in practice. In the following proposition, we find that there are regimes where this is not the case, by bounding the success probability of any QC or CC receiver, assuming the transmitted codeword is subject to a binary symmetric noise channel. 

\begin{proposition}
  \label{prop:qc-stochastic}
  Consider a quantum codeword sensing protocol with signal strength \(\theta\), for an \([n,k,d]\) code, using a QC receiver with \(Q\) qubits, with \(\Phi = Q\theta \in [0, \pi]\). Assuming that each codeword is transmitted with equal likelihood, and that during transmission, the bits of the codeword are flipped independently with probability \(p \in [0, 1/2]\), we have that the probability of successful decoding is upper bounded by
  \begin{equation}
    \label{eq:qc-upper-bound}
    \begin{split}
      \Pr[u = \hat{u}] \le \min \Bigg \{ &\frac{1}{2^k} + \sqrt{\frac{\ln 2}{2} n C_{QC}(\Phi, p)} , \\
      &\frac{1 + n C_{QC}(\Phi, p)}{k} \Bigg \}
    \end{split}
  \end{equation}
  where we assume \(k \ge 1\) and, for \(x \in [0,1]\), have written
  \begin{equation}
    \label{eq:C-generic}
    C(x, p) = 1-h_2\!\left(\frac{1-(1-2p)x}2 \right),
  \end{equation}
  so that \(C_{QC}(\Phi, p) = C\big(\sin(\Phi/2),\, p\big)\). For a CC receiver with the same \(Q\) and \(\theta\), the success probability is upper bounded by
  \begin{equation}
    \label{eq:cc-upper-bound}
    \begin{split}
      \Pr[u = \hat{u}] \le \min \Bigg \{ &\frac{1}{2^k} + \sqrt{\frac{\ln 2}{2} n C_{CC}(\theta, p, Q)} , \\
      &\frac{1 + nC_{CC}(\theta, p, Q)}{k} \Bigg \}
    \end{split}
  \end{equation}
  where now \(C_{CC}(\theta, p, Q) = C\big(v_Q(\theta),\, p\big)\), with
  \begin{equation}
    \label{eq:cc-visibility}
    v_Q(\theta) = \sqrt{1-\cos^{2Q}(\theta/2)}.
  \end{equation}
\end{proposition}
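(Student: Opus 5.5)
Our plan is an information-theoretic argument: bound the mutual information between the message $u$ and the receiver's full classical transcript $y$, then convert that bound into a bound on the success probability. Write $y = (y_1,\dots,y_n)$, where $y_x$ is the classical outcome recorded in slot $x$. By the chain rule,
\[
I(u;y)=\sum_{x=1}^n I(u;y_x\mid y_{<x}).
\]
Conditioned on $y_{<x}$, the receiver prepares a fixed probe $\ket{\psi_0}$ and measurement (adaptivity is absorbed into the conditioning, and mixed preparations are handled by the purification and labelling trick from the proof of Proposition~\ref{prop:qc-perfect-recovery}). The outcome $y_x$ then depends on $u$ only through the received bit $r(x)=c_u(x)\oplus e(x)$. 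Since $u \to r(x) \to y_x$ is a Markov chain given $y_{<x}$, the data processing inequality gives $I(u;y_x\mid y_{<x})\le I(r(x);y_x\mid y_{<x})$.

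The key single-slot step bounds $I(r(x);y_x\mid y_{<x})$ by $C(\text{visibility},p)$. The slot acts as a classical-quantum channel that sends bit $b$ to $\ket{\psi_0}$ or $R\ket{\psi_0}$. The input $c_u(x)$ passes through a BSC$(p)$ first, and the slot is followed by an arbitrary measurement. We plan to write any measurement on two pure states with overlap $F=|\langle\psi_0|R|\psi_0\rangle|$ as a post-processing of the optimal binary discrimination between the two states, which is where we expect the main obstacle to lie.

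For two pure states, a well-known result states that every binary-input channel obtained by measuring them is degraded with respect to a BSC with crossover $\frac{1-\sqrt{1-F^2}}{2}$. Equivalently, the accessible information of pure-state binary channels is maximised by the Helstrom measurement in the symmetric case. We would try to justify this by reducing to the two-dimensional span, where the two states can be parametrised as real vectors at a fixed angle. Every POVM there is a mixture of rank-one elements, and $I$ is maximised at a projective measurement symmetric about the bisector. Composing this BSC with the BSC$(p)$ gives a BSC with crossover $\frac{1-(1-2p)\sqrt{1-F^2}}{2}$. Its mutual information for any input distribution is at most its capacity, $1-h_2(\cdot)$. Lemma~\ref{lem:overlap} gives $F\ge\cos(\Phi/2)$, so $\sqrt{1-F^2}\le\sin(\Phi/2)$, and monotonicity of $C(x,p)$ in $x$ yields $I\le C_{QC}(\Phi,p)$. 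For the CC case, a product probe gives $F\ge\prod_j\cos(\theta/2)=\cos^Q(\theta/2)$ by applying Lemma~\ref{lem:overlap} to each qubit with $m=1$. Hence $\sqrt{1-F^2}\le v_Q(\theta)$ and the slot information is at most $C_{CC}$. Summing over slots gives $I(u;y)\le nC$ in both cases.

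Finally, we convert the information bound into a success bound in two ways, one for each term of the minimum. For the second term, Fano's inequality gives $H(u\mid\hat u)\le 1+P_e k$, with $H(u)=k$. This yields $k - nC\le 1+(1-P_s)k$, so $P_s\le (1+nC)/k$. For the first term, we compare the joint law of $(u,\hat u)$ to the product of its marginals. Under the product law, success has probability at most $2^{-k}$, since $u$ is uniform. Pinsker's inequality bounds the total variation distance by $\sqrt{\frac{\ln 2}{2} I(u;\hat u)}$, where $I$ is measured in bits. Combining this with $I(u;\hat u)\le I(u;y)\le nC$ gives the first term, and taking the minimum completes the proof.
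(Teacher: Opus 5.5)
Your outer structure is fine and matches the paper: the chain rule over slots, Lemma~\ref{lem:overlap} for the QC and CC overlaps, Fano for the second branch, and Pinsker applied to $I(u;\hat u)=D(P_{u\hat u}\,\|\,P_uP_{\hat u})$ for the first.

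The gap is the single-slot step, which rests on a false lemma. It is not true that every measurement channel on two pure states with overlap $\gamma=|\langle\psi_0|R|\psi_0\rangle|$ (your $F$) is a post-processing of the BSC with crossover $\epsilon_H=\frac{1-\sqrt{1-\gamma^2}}{2}$. For a counterexample, take $0<\gamma<1$ and measure $\{\ket{\chi}\bra{\chi},\,I-\ket{\chi}\bra{\chi}\}$, with outcome $0$ for $\ket{\chi}\bra{\chi}$ and $\ket{\chi}$ in the span orthogonal to $\ket{\psi_1}=R\ket{\psi_0}$. This gives $P(0\mid1)=0$ and $P(0\mid0)=1-\gamma^2>0$. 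Any post-processing $T$ of BSC$(\epsilon_H)$ has $P(0\mid1)=\epsilon_H T(0\mid0)+(1-\epsilon_H)T(0\mid1)$. Since $0<\epsilon_H<1$, this vanishes only if $T(0\mid\cdot)\equiv0$, which forces $P(0\mid0)=0$.

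The statement you call equivalent, optimality of the symmetric Helstrom measurement for accessible information, is true but strictly weaker. It controls only $\max_q I$ of the pure-state channel $r(x)\mapsto y_x$, not the mutual information at each prior. For strongly biased priors the measurement above beats BSC$(\epsilon_H)$: it gives $\Theta(q\log(1/q))$ versus $O(q)$ in the minority-input probability $q$. So it cannot be pushed through the BSC$(p)$ that sits in front of the measurement. Used directly, it gives only $I\le C(\sin(\Phi/2),0)$, which loses the $(1-2p)$ dependence the proposition asserts.

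Relatedly, your data-processing step should read $I(u;y_x\mid y_{<x})\le I(c_u(x);y_x\mid y_{<x})$, using that $e(x)$ is independent of $(u,y_{<x})$. The quantity $I(r(x);y_x\mid y_{<x})$ that you wrote does not see the BSC at all, and is not bounded by $C(\cdot,p)$ in general when $p>0$.

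The missing idea, which is what the paper does, is to fold the transmission noise into the quantum states before invoking the pure-state result, rather than composing classical channels afterwards. Given $c_u(x)=c$ and $y_{<x}$, the slot output is $\sigma_c=(1-p)\ket{\psi_c}\bra{\psi_c}+p\ket{\psi_{1-c}}\bra{\psi_{1-c}}$. Both states lie in a two-dimensional span, so the qubit formula $F(\sigma_0,\sigma_1)^2=\mathrm{Tr}(\sigma_0\sigma_1)+2\sqrt{\det\sigma_0\det\sigma_1}$ gives $F(\sigma_0,\sigma_1)^2=1-(1-2p)^2(1-\gamma^2)$. By Uhlmann's theorem there are purifications of $\sigma_0,\sigma_1$ with overlap $\kappa=F(\sigma_0,\sigma_1)$. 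Any POVM on the $\sigma_c$ is also a POVM on these purifications, so the accessible information is at most the two-pure-state value $1-h_2\bigl(\frac{1-\sqrt{1-\kappa^2}}{2}\bigr)$, maximized over priors. Since $\sqrt{1-\kappa^2}=(1-2p)\sqrt{1-\gamma^2}$, this is exactly $C(\sqrt{1-\gamma^2},p)$. With $\gamma\ge\cos(\Phi/2)$ in the QC case, $\gamma\ge\cos^Q(\theta/2)$ in the CC case, and monotonicity of $C$ in its first argument, your remaining steps go through unchanged.
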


The rough idea of the proof is that in the case of a QC receiver, we may treat every time slot as a single use of the CQ channel
\begin{align}
    0 &\mapsto (1-p) |\psi_0\rangle\langle\psi_0| + p |\psi_1\rangle\langle\psi_1|\\
    1 &\mapsto (1-p) |\psi_1\rangle\langle\psi_1| + p |\psi_0\rangle\langle\psi_0|
\end{align}
where \(|\psi_0\rangle\) is the initial state of the sensor, and \(|\psi_1\rangle = R|\psi_0\rangle\). By upper bounding the accessible information per time slot and using Fano's inequality, we may upper bound the probability of successful decoding. We defer the details to Appendix~\ref{sec:proof-qc-stochastic}.

A few brief remarks to build intuition for Equations~\eqref{eq:qc-upper-bound} and \eqref{eq:cc-upper-bound}. First, we may think of \(C_{QC}(\Phi, p)\) as an upper bound on the number of bits of information that may be extracted per time slot by a QC receiver. For a fixed code, as either \(p \to 1/2\) or \(\Phi \to 0\), this quantity approaches zero, indicating that we must increase the length of the code to compensate.
In this work we are particularly interested in receivers that work with a small number of qubits, and a small signal strength.  Using that \(h_2(x) = 1 - (2\ln 2)^{-1} (1-2x)^2 + O((x-1/2)^4)\), we have that for small \(\Phi\),
\begin{equation}
      C_{QC}(\Phi, p) \approx \frac{1}{2\ln 2}(1-2p)^2 \frac{\Phi^2}{4}.
\end{equation}
Plugging this into Equation~\eqref{eq:qc-upper-bound}, we get that
\begin{equation}a
      \label{eq:small-angle-qc-approx}
      \Pr[u=\hat{u}] \lessapprox \frac{1}{k} + \frac{(1-2p)^2}{8 \ln 2} \frac{n\Phi^2}{k}.
\end{equation}
Thus, if \(n\Phi^2 \ll k\), the second branch is \(o(1)\) and the decoding success probability must vanish; in the more demanding regime \(n\Phi^2 = O(1)\) it is at most \(\approx 1/k\). The first branch instead reads \(\Pr[u=\hat u] \lessapprox \frac{1}{2^k} + \frac{(1-2p)\sqrt{n}\,\Phi}{4}\), so it becomes informative only once \(\sqrt{n}\,\Phi \ll 1\), where it pins the success probability near the \(1/2^k\) of random guessing.
As we will see for the simplex code in the following section, in this small \(\Phi\) regime, a receiver with a quantum memory may surpass the bounds established in Proposition~\ref{prop:qc-stochastic} . In the case that the memory is at least the size of the code, this is perhaps not surprising, and may be seen as a manifestation of the power of joint measurements over product measurements \cite{hausladen1996classicalinformation,guha2011structuredoptical}, as the receiver may effectively store each signal in its memory and then perform a large collective measurement at the end. However, even as \(\Phi \to 0\), our QQ receivers will have a quantum memory scaling as \(o(n)\), suggesting that they are somehow extracting the message from its encoding as it is sequentially given the bits, even as each pulse encodes less and less information.

\section{Quantum Codeword Sensing with a Quantum Memory}
\label{sec:describe-protocol}
In this section we construct QQ decoders for the simplex code. We first introduce the code, then use a small example to show how a quantum memory can convert a sequence of signal pulses into a readable encoding of the message. We next generalize this circuit to decode larger codes exactly at weaker total strengths using a number of qubits logarithmic in the codeword length. Finally, we show how parallel decoders can correct a bounded number of transmission bit flips.

\subsection{The simplex code}
We design our QQ decoder for the binary simplex code since it offers particularly useful algebraic structure. The $[n,k]$ simplex code is defined as follows. Choose $n=2^k-1$ so that the $n$ bits of each codeword can be indexed by the nonzero bitstrings \(x\in\mathbb{F}_2^k\). Then, given a length-$k$ message \(u\in\mathbb{F}_2^k\) to be encoded, the corresponding codeword bit at position $x$ is defined to be \(c_u(x)=u\cdot x\), where the dot product is evaluated modulo two. The resulting code is linear and has distance $2^{k-1}$. For example, when \(k=2\), the $n=3$ positions of each codeword can be indexed as \(x=01,10,11\), and we obtain the codewords
\begin{equation}\label{eq:simplex_words}
\begin{aligned}
c_{00}&=000, & c_{01}&=101,\\
c_{10}&=011, & c_{11}&=110
\end{aligned}
\end{equation}
for each possible message. 

\subsection{A toy example}
To illustrate how the message of an incoming codeword can be extracted using a quantum memory, we construct a simple QQ decoder for the \([3,2]\) simplex code. The decoder is given by the circuit in Fig.~\ref{fig:toy_example}, which uses a memory register of \(k=2\) qubits  initialized to \(\ket{+}^{\otimes 2}\), and a sensor register containing one qubit initialized to \(\ket{0}\). For now, assume that there are no transmission errors and set \(\theta=\pi\), so that each received codeword bit 1 applies \(R_Z(\pi)=Z\) to the sensor qubit. Suppose the original message $u$ is unknown, so that the received string is therefore any one of the possible codewords \(000,101,011,110\) given in Eq. (\ref{eq:simplex_words}). The goal is to perform some control operations in between each received bit so that the message $u$ is ultimately printed to the memory as a bitstring \(\ket{u}_X=H^{\otimes 2}\ket{u}\) in the $X$-basis. The following simple choice of control gates turns out to be sufficient. Let \(e_i\) denote the string with a single 1 in position \(i\), so that \(e_1=10\) and \(e_2=01\). Then, at time slot \(x=e_i\), a CNOT from memory qubit \(i\) to the sensor is applied immediately before and after the signal interaction.
\begin{figure}[t]
\centering
\input{Figures/toy-decoder}%
\caption{A QQ decoder for the \([3,2]\) simplex code at \(\theta=\pi\). Here \(r(x)=c_u(x)\) is the received bit at time step \(x\). In slot \(x=e_i\), two CNOTs couple memory qubit \(i\) to the sensor before and after the signal. The final memory state is \(\ket{u}_X\); a final \(X\)-basis measurement reads out \(u\).}
\label{fig:toy_example}
\end{figure}

To see why this choice of controls works, note that the $e_i$th bit of the received codeword always equals the $i$th bit of the underlying message, that is, \(c_u(e_i)=u_i\). Hence, if we could phase-flip the $i$th qubit in the memory whenever a 1 is received in the $e_i$th time slot, we could print the message $\ket{u}_X$ to the memory register, as desired. The CNOT controls used in this circuit implement this behavior exactly. For example, suppose that the message being sent is \(u=01\). The corresponding codeword being received is $c_u = 101$, so the signal applies \(Z\) in slots \(01\) and \(11\). Observe that a \(Z\) on the target of a CNOT can be commuted through the gate by applying an additional \(Z\) to its control, as shown in Fig.~\ref{fig:circuit_identities}(a). Figure~\ref{fig:pushing_Z} applies this identity to the toy circuit: moving the first signal \(Z\) through the preceding CNOT copies a \(Z\) onto memory qubit 2. The CNOTs then cancel in pairs, leaving two \(Z\) gates on the sensor qubit, which also cancel. Only the \(Z\) on memory qubit 2 remains, and this gate transforms the initial state $\ket{+}^{\otimes 2}$ into the desired message state $\ket{01}_X$. The same argument holds for any message \(u\): the total effective unitary applied to the memory is \(Z[u]=Z^{u_1}\otimes Z^{u_2}\), giving the final state
\begin{equation}
Z[u]\ket{+}^{\otimes 2}=\ket{u}_X.
\end{equation}
Thus, the circuit converts the temporal pattern of codeword bits into a spatial pattern of \(Z\) operations which represents the underlying message. The memory retains the phases acquired in successive time slots, allowing the message to be recovered in one final \(X\)-basis readout. Different messages produce orthogonal memory states and can therefore be distinguished with zero error.

\begin{figure}[t]
\centering
\input{Figures/pauli-propagation}%
\caption{Pauli propagation identities. (a) Commuting a target \(Z\) in either direction through a CNOT adds a \(Z\) to its control. (b) Commuting an \(X\) in either direction through a controlled-\(Z\) adds a \(Z\) to the other qubit.}
\label{fig:circuit_identities}
\end{figure}

\begin{figure*}[t]
\centering
\small
\usetikzlibrary{arrows.meta}
\begin{tabular}{@{}c@{\hspace{6pt}}c@{\hspace{6pt}}c@{\hspace{6pt}}c@{\hspace{6pt}}c@{}}
{\begin{quantikz}[
 row sep={0.68cm,between origins},column sep=0.18cm,align equals at=2,
 execute at end picture={
  \coordinate (sensorpush) at ([xshift=-2.5pt,yshift=-6.5pt]group\tikzcdmatrixname-3-3.west);
  \draw[blue!65!black,line width=0.6pt,-{Stealth[length=3.5pt,width=3pt]}]
   (sensorpush) -- ([xshift=-8.5pt]\tikzcdmatrixname-3-2.center |- sensorpush);
  \draw[blue!65!black,line width=0.6pt,rounded corners=1.5pt,-{Stealth[length=3.5pt,width=3pt]}]
   ([xshift=-3pt,yshift=3.6pt]group\tikzcdmatrixname-3-3.west)
   -- ([xshift=4.3pt,yshift=3.6pt]\tikzcdmatrixname-3-2.center)
   -- ([xshift=4.3pt,yshift=3.4pt]\tikzcdmatrixname-2-2.center)
   -- ([xshift=-8.5pt,yshift=3.4pt]\tikzcdmatrixname-2-2.center);
 }]
\lstick{$M_1:\ket{+}$}
 & &[5pt] & & \ctrl{2} & & \ctrl{2} & & \\
\lstick{$M_2:\ket{+}$}
 & \ctrl{1} & & \ctrl{1} & & & & & \\
\lstick{$S:\ket{0}$}
 & \targ{} & \gate{Z} & \targ{} & \targ{}
 & & \targ{} & \gate{Z} &
\end{quantikz}}
 & \(=\) &
{\begin{quantikz}[row sep={0.68cm,between origins},column sep=0.22cm,align equals at=2]
\lstick{$\ket{+}$}
 & & &
 &[3pt] \ctrl{2}\gategroup[3,steps=2,style={densely dotted,inner sep=0pt}]{}
 & \ctrl{2} & & \\
\lstick{$\ket{+}$}
 & \gate{Z} & \ctrl{1}\gategroup[2,steps=2,style={densely dotted,inner sep=0pt}]{}
 & \ctrl{1} & & & & \\
\lstick{$\ket{0}$}
 & \gate{Z} & \targ{} & \targ{} & \targ{} & \targ{} & \gate{Z} &
\end{quantikz}}
 & \(=\) &
{\begin{quantikz}[row sep={0.68cm,between origins},column sep=0.22cm,align equals at=2]
\lstick{$\ket{+}$} & & \rstick{$\ket{+}$} \\
\lstick{$\ket{+}$} & \gate{Z} & \rstick{$\ket{-}$} \\
\lstick{$\ket{0}$} & & \rstick{$\ket{0}$}
\end{quantikz}}
\end{tabular}
\caption{Reduction of the toy circuit for \(u=01\). The signal supplies \(Z\) in slots \(01\) and \(11\). The arrows show the first \(Z\) being pushed left through the preceding CNOT, creating a \(Z\) on memory qubit 2. Each dotted CNOT pair then cancels, and the two remaining sensor \(Z\) operators cancel. Only \(Z\) on memory qubit 2 remains, giving \(\ket{+-}=\ket{01}_X\), the desired message string.}
\label{fig:pushing_Z}
\end{figure*}

An alternative way to understand how this circuit works will become insightful later on. Write the initial memory state as \(\ket{+}^{\otimes k}=2^{-k/2}\sum_{z\in\mathbb{F}_2^k}\ket{z}\). For each basis state $\ket{z}$ of the memory and at each time slot \(x\), we can imagine that the control of the first CNOT computes the Boolean \emph{activation function}
\begin{equation}
f_z(x)=\begin{cases}
z_i,&x=e_i,\\
0,&\text{otherwise},
\end{cases}
\end{equation}
into the sensor, taking \(\ket{z}\ket{0}\) to \(\ket{z}\ket{f_z(x)}\) and possibly entangling the registers for superpositions of memory basis states. The signal then contributes a phase \(e^{i\theta f_z(x)c_u(x)}\) to the state \(\ket{z}\ket{f_z(x)}\), and the second control guarantees the sensor is restored to \(\ket{0}\) while the phase remains imparted on the memory. This process is called \textit{phase kickback}: after each slot, the sensor is disentangled and reset, while the memory retains the acquired relative phases. At each time step, we repeat the procedure of computing an activation function, conditionally activating the sensor qubit on the result, accumulating phase from a signal bit, and turning the sensor back off again. The final state is
\begin{equation}
\ket{\psi(c_u)}=
2^{-k/2}\sum_z
\exp\!\left(i\theta\sum_{x\ne 0}f_z(x)c_u(x)\right)\ket{z}.
\end{equation}
For this activation function, \(\sum_{x\ne 0}f_z(x)c_u(x)=\sum_i z_i u_i\), so at \(\theta=\pi\) we recover
\begin{equation}
\ket{\psi(c_u)}=2^{-k/2}\sum_z(-1)^{u\cdot z}\ket{z}=\ket{u}_X.
\end{equation}
For each \(\ket{z}\), the values of the activation function across all time slots form a bitstring \(f_z=(f_z(x))_{x\ne0}\), which we call a \emph{mask}. We can picture this mask overlaid on the incoming codeword: a phase is acquired at a position only when the corresponding bits of both strings equal one. The total phase picked up by the memory state $\ket{z}$ depends on the ordinary integer dot product of its mask and the codeword. Hence, to design an effective decoder, one must engineer an appropriate family of masks $f_z$ such that the bits of the message $u$ can be read out from the final state of the memory register. 

This toy protocol has several limitations. Firstly, it only distinguishes all messages with zero error when \(\theta=\pi\), where a receiver with only classical memory can already read each incoming bit without error. Indeed, memory qubit \(i\) could in principle be measured immediately after slot \(e_i\) to read out the $i$th message bit, so this example does not directly establish an advantage from \textit{persistent} memory. Furthermore, the protocol does not use the redundancy of the code: memory qubit \(i\) acquires phase only from position \(e_i\), while the remaining codeword positions contribute no phase to the memory. For larger codes, we could instead choose masks which overlap the possible incoming codewords at more positions, allowing more of the available phase to be accumulated in the memory. We will devise more sophisticated masks which allow perfect decoding to be achieved at weaker signal strengths.

\subsection{Simplex QQ decoder construction} 

We now generalize the toy decoder into a framework which will overcome these limitations and allow larger codes to be decoded exactly, ultimately even with a bounded number of transmission errors, at total strengths \(\Phi\) which preclude perfect CC or QC decoding. For now, however, we continue the assumption that there are no transmission errors, so that the received bitstring is an uncorrupted codeword. The circuit in Fig.~\ref{fig:general-circuit} depicts this generalized circuit, and uses a memory register of \(k\) qubits and a sensor register of \(S\) qubits (with possibly $S>1$). As in the toy example, for each memory basis state \(\ket{z}\) and time step \(x\), we evaluate the activation function \(f_z(x)\) and activate the sensor if \(f_z(x)=1\). Here, activating the sensor flips all \(S\) qubits from \(\ket{0}^{\otimes S}\) to \(\ket{1}^{\otimes S}\). Writing \(r(x)=c_u(x)\) for the received bit, each activated qubit acquires a phase \(\theta r(x)\), and the phases add constructively such that the total phase is \(S\theta r(x)\); we call \(S\theta\) the \emph{interaction strength}. Applying the same control again resets the sensor, giving the net action
\begin{equation}
\label{eq:general-phase-kickback}
\ket{z}\ket{0}^{\otimes S}
\longmapsto
e^{iS\theta r(x) f_z(x)}\ket{z}\ket{0}^{\otimes S}.
\end{equation}
The constructive interference over all \(S\) sensor qubits thus allows more signal to be accumulated per time slot.

\begin{figure*}[t]
\centering
\input{Figures/general-qq-decoder}%
\caption{General simplex QQ decoder with \(k\) memory qubits and \(S\) sensor qubits. Each circle denotes a Boolean control: on memory basis state \(\ket{z}\), it flips every sensor qubit if \(f_z(x)=1\). The signal applies \(R_Z(\theta)^{r(x)}\) to each sensor qubit, where \(r(x)\) is the received bit. Repeating the control restores the sensor to \(\ket{0}^{\otimes S}\) and retains the phase \(e^{iS\theta r(x) f_z(x)}\) on the memory state $\ket{z}$.}
\label{fig:general-circuit}
\end{figure*}

But the most significant difference from the toy decoder is our choice of sensor activation function. At each time slot \(x\), we assign the activation function
\begin{equation}
\label{eq:affine-activation}
f_z(x)=x\cdot z\oplus b(x),
\end{equation}
where \(x\cdot z\) is evaluated modulo two and \(b(x)\in\{0,1\}\) is independent of the memory label \(z\). Thus, the masks can be written as \(f_z=c_z\oplus b\), with \(b=(b(x))_{x\ne0}\) the bitstring formed by $b(x)$ across all values of nonzero $x$. Equivalently, the masks form a coset of the simplex code. This structure is useful for three reasons. First, in the absence of transmission errors, the phase accumulated by \(\ket{z}\) is proportional to the ordinary integer dot product \(f_z\cdot c_u\). These masks can overlap the incoming codeword at many positions: for example, if \(b=0\), then at $z=u$, the mask overlaps the codeword at every position, since \(f_u=c_u\). It follows that every 1 in the codeword contributes to the phase of the memory state \(\ket{u}\). With a suitable choice of \(b\), this phase accumulation allows the decoder to succeed using weaker individual pulses. Second, the shared algebraic structure of the masks and codewords makes their overlaps, and hence the acquired phases, easier to analyze. Third, these activation functions are easy to implement using CNOT and single-qubit \(X\) gates, as we now show.

Figure~\ref{fig:affine_activation} shows the implementation of one decoder time step using elementary quantum gates, choosing \(k=4\) and \(x=1010\) as an example. First, we must compute the activation function $f_z(x) = (x\cdot z) \oplus b(x)$ and flip the sensor qubits if it evaluates to 1. We compute this function in two steps: the CNOT gates compute $(x\cdot z)$, and a possible $X$ gate on the sensor subsequently adds the offset bit $b(x)$ if it equals 1. Concretely, given any memory state \(\ket{z}\), the CNOTs from qubits 1 and 3 map each sensor qubit from \(\ket{0}\) to \(\ket{x\cdot z}=\ket{z_1\oplus z_3}\). Afterward, we apply an extra $X$ gate to each sensor qubit if \(b(x)=1\), thereby putting each into the state \(\ket{f_z(x)}=\ket{z_1\oplus z_3\oplus b(x)}\), as desired. After the signal pulse, we reverse these operations to reset the sensor while retaining the phase on the memory. For any other general \(x\), the CNOTs are instead controlled from the memory qubits $i$ for which \(x_i=1\). Notably, the control uses only Clifford gates, and the sensor register is returned to \(|0\rangle^{\otimes S}\), disentangled from the memory, after each time step. Thus, the sensor need only preserve coherence briefly, within individual time steps.

\begin{figure*}[t]
\centering
\small
\begin{tabular}{@{}c@{\hspace{8pt}}c@{\hspace{8pt}}c@{}}
\begin{quantikz}[thin lines,wire types={q,q},row sep={1.375cm,between origins},column sep=0.22cm,align equals at=1.5]
\lstick{$M$}
 &[6pt] \qwbundle{}
 & \gate[style={circle,inner sep=-1.5pt}]{f_z(x)}\wire[d][1]{q}
 &
 & \gate[style={circle,inner sep=-1.5pt}]{f_z(x)}\wire[d][1]{q}
 & \\
\lstick{$S$}
 & \qwbundle{}
 & \targ{}
 & \gate{R_Z(\theta)^{r_x}}
 & \targ{}
 &
\end{quantikz}
& \(=\) &
\begin{quantikz}[thin lines,row sep={0.55cm,between origins},column sep=0.17cm,align equals at=3.75]
\lstick{$M_1$}
 & & \ctrl{4} & & & & & & \ctrl{4} & \\
\lstick{$M_2$}
 & & & & & & & & & \\
\lstick{$M_3$}
 & & & \ctrl{2} & & & & \ctrl{2} & & \\
\lstick{$M_4$}
 & & & & & & & & & \\
\lstick{$S$}
 &[6pt] \qwbundle{}
 & \targ{}
 & \targ{}
 & \gate{X^{b(x)}}
 & \gate{R_Z(\theta)^{r_x}}
 & \gate{X^{b(x)}}
 & \targ{}
 & \targ{}
 &
\end{quantikz}
\end{tabular}
\caption{Implementation of one time step of the decoder in Fig.~\ref{fig:general-circuit}, shown for \(k=4\) and \(x=1010\), so that \(f_z(x)=z_1\oplus z_3\oplus b(x)\). On the right, each CNOT into the sensor register represents one CNOT to each of its \(S\) qubits.}
\label{fig:affine_activation}
\end{figure*}

In the toy example, the circuit transduced the temporal pattern of codeword bits into a spatial pattern of single-qubit gates on the memory. Whether this is possible in the general decoder is less obvious. However, the shared simplex structure of the codewords and masks allows us to write the final memory state in a form which nonetheless makes the latent message pattern explicit. In the absence of transmission errors, and up to a global phase, we have
\begin{equation}
\label{eq:memory-message-pattern}
\ket{\psi(c_u)}=D_b X[u]D_b^\dagger\ket{+}^{\otimes k},
\end{equation}
where \(X[u]=\bigotimes_{i=1}^k X^{u_i}\) and \(D_b\) is a diagonal unitary determined by the offset \(b\) and the interaction strength, $S\theta$. Note that this is a purely formal decomposition: we never physically apply \(D_b\) or \(X[u]\), but use these operators to give an intuitive description of the circuit's total action. The bits of \(u\) thus appear as a spatial pattern of Pauli \(X\) gates; however, this pattern is potentially obfuscated by the unitary $D_b$ which conjugates it. Our goal is to choose \(b\) so that conjugation by \(D_b\) does not scramble the spatial pattern in a way that makes it unreadable by the final measurement.

We derive this decomposition as follows. Recall that the phase accumulated on a memory basis state \(\ket{z}\) is proportional to the number of positions at which the mask \(f_z\) and the codeword \(c_u\) both share 1s, given by $\sum_{x\ne0}f_z(x)c_u(x)$. Thus, the total unitary operation $U_{\rm tot}$ implemented by the circuit on a memory state $\ket{z}$ is
\begin{equation}
\label{eq:memory-overlap-action}
U_{\rm tot}\ket{z}
=\exp\left(iS\theta\sum_{x\ne0}f_z(x)c_u(x)\right)\ket{z}.
\end{equation}
We can count the number of common ones by adding the Hamming weights of the codeword and mask and subtracting the number of places where their bitstrings disagree. Thus
\begin{equation}
\label{eq:mask-codeword-overlap}
\sum_{x\ne0}f_z(x)c_u(x)
=\frac{|c_u|+|f_z|-|f_z\oplus c_u|}{2}.
\end{equation}
The shared simplex structure now simplifies the term $|f_z\oplus c_u|$. Because the simplex code is linear, we can write
\begin{equation}
\label{eq:mask-codeword-shift}
\begin{aligned}
f_z\oplus c_u
&=c_z\oplus c_u\oplus b\\
&=c_{z\oplus u}\oplus b=f_{z\oplus u}.
\end{aligned}
\end{equation}
Therefore, $U_{\rm tot}$ acts on a memory state \(\ket{z}\) by applying the following product of three phases:
\begin{equation}
\label{eq:memory-three-phases}
\begin{aligned}
U_{\rm tot}\ket{z}
={}&e^{iS\theta|c_u|/2}e^{iS\theta|f_z|/2}e^{-iS\theta|f_{z\oplus u}|/2}\ket{z}.
\end{aligned}
\end{equation}
The first phase is independent of \(z\) and hence is a global phase on the memory register. The other two depend on the weights of \(f_z\) and \(f_{z\oplus u}\), with opposite signs in their exponents. We now show how sequentially applying $X[u]$ and an appropriately defined diagonal unitary $D_b$ to a memory state $\ket{z}$ can recreate the action of $U_{\rm tot}$ on $\ket{z}$, up to a global phase. In particular, define the diagonal unitary \(D_b\) by
\begin{equation}
\label{eq:mask-diagonal-unitary}
D_b\ket{z}=e^{iS\theta|f_z|/2}\ket{z}.
\end{equation}
Now start with a bare memory state \(\ket{z}\). Applying \(X[u]\) once shifts the state to \(\ket{z\oplus u}\). Next, applying \(D_b^\dagger\) contributes the phase \(e^{-iS\theta|f_{z\oplus u}|/2}\), which matches the third phase term to appear in Eq. (\ref{eq:memory-three-phases}). A second \(X[u]\) then returns the state to \(\ket{z}\), after which another \(D_b\) contributes the phase \(e^{iS\theta|f_z|/2}\), which matches the second phase in Eq. (\ref{eq:memory-three-phases}). Thus, up to the global phase $e^{iS\theta|c_u|/2}$,
\begin{equation}
\label{eq:memory-unitary-decomposition}
U_{\rm tot}\ket{z}=D_b X[u]D_b^\dagger X[u]\ket{z}.
\end{equation}
Furthermore, since the memory register is initially prepared in \(\ket{+}^{\otimes k}\), and \(X[u]\ket{+}^{\otimes k}=\ket{+}^{\otimes k}\), the rightmost \(X[u]\) has no effect on this initial state. We thus recover Eq.~\eqref{eq:memory-message-pattern}, up to a global phase.

Equipped with this formal decomposition, we now return to choosing \(b\) so that \(D_b\) does not scramble the spatial pattern of \(X[u]\) in a way that makes it unreadable. Suppose, for a moment, that we could choose \(D_b=H^{\otimes k}\). Then the final memory state would equal 
\begin{equation}
\label{eq:hadamard-ideal-conjugation}
U_{\rm tot}\ket{+}^{\otimes k} = H^{\otimes k}X[u]H^{\otimes k}\ket{+}^{\otimes k}=Z[u]\ket{+}^{\otimes k} = \ket{u}_X,
\end{equation}
exactly the output we want. Unfortunately, however, \(D_b\) must be diagonal, whereas \(H^{\otimes k}\) is not. We must therefore instead find a diagonal \(D_b\) such that $D_b X[u] D_b^\dagger$ produces a readable spatial pattern of Paulis encoding the message.

Interestingly, a suitable \(D_b\) could in principle be constructed from controlled-\(Z\) (CZ) gates. Figure~\ref{fig:decomp} illustrates the choice \(D_b=\mathrm{CZ}_{12}\mathrm{CZ}_{34}\) for \(k=4\) and \(u=0110\), so that \(X[u]=X_2X_3\). Using the identity in Fig.~\ref{fig:circuit_identities}(b), we push \(X_2\) through the CZ on qubits 1 and 2, creating a new \(Z_1\). Similarly, pushing \(X_3\) through the CZ on qubits 3 and 4 creates a new \(Z_4\). The matching CZ gates now cancel in pairs, leaving \(Z_1X_2X_3Z_4\) on the memory. Since \(X\ket{+}=\ket{+}\), the remaining \(X\) operators are absorbed into the initial memory state, leaving \(Z_1Z_4\ket{+}^{\otimes4}=\ket{1001}_X\). Note that this final state does not equal the exact message state $\ket{u}_X = \ket{u_1u_2u_3u_4}_X$, as one might initially expect; instead, it equals a permutation of this state, namely $\ket{u_2u_1u_4u_3}_X$. Nonetheless, the original message can be read out unambiguously from this final state, since the permutation is invertible and can be easily undone. Therefore, if we can choose $b$ and the interaction strength $S\theta$ such that $D_b$ realizes this $\mathrm{CZ}$ circuit, then unambiguous decoding would be possible. More generally, for larger \textit{even} integer values of \(k\), the corresponding choice of $\mathrm{CZ}$ circuit would be
\begin{equation}
\label{eq:paired-cz-diagonal}
D_b=\mathrm{CZ}_{1,2}\mathrm{CZ}_{3,4}\cdots\mathrm{CZ}_{k-1,k},
\end{equation}
and the final state would be \(\ket{Pu}_X\) for any message \(u\), up to a global phase, where \(P\) is the \(k\times k\) permutation matrix
\begin{equation}
\label{eq:paired-bit-permutation}
P=I_{k/2}\otimes\begin{pmatrix}0&1\\1&0\end{pmatrix}
=\begin{pmatrix}
0&1&&&\\
1&0&&&\\
&&\ddots&&\\
&&&0&1\\
&&&1&0
\end{pmatrix}.
\end{equation}

\begin{figure*}[t]
\centering
\small
\usetikzlibrary{arrows.meta}
\begin{tabular}{@{}c@{\hspace{7pt}}c@{\hspace{7pt}}c@{\hspace{7pt}}c@{\hspace{7pt}}c@{}}
{\begin{quantikz}[
 thin lines,row sep={0.66cm,between origins},column sep=0.24cm,align equals at=2.5,
 execute at end picture={
  \coordinate (pushxtwo) at ([xshift=1.5pt,yshift=-4.5pt]group\tikzcdmatrixname-2-4.east);
  \draw[blue!65!black,line width=0.6pt,-{Stealth[length=3.5pt,width=3pt]}]
   (pushxtwo) -- ([xshift=11.5pt]\tikzcdmatrixname-2-5.center |- pushxtwo);
  \draw[blue!65!black,line width=0.6pt,rounded corners=1.5pt,-{Stealth[length=3.5pt,width=3pt]}]
   ([xshift=1.5pt,yshift=4.5pt]group\tikzcdmatrixname-2-4.east)
   -- ([xshift=-4.5pt,yshift=4.5pt]\tikzcdmatrixname-2-5.center)
   -- ([xshift=-4.5pt,yshift=4.5pt]\tikzcdmatrixname-1-5.center)
   -- ([xshift=11.5pt,yshift=4.5pt]\tikzcdmatrixname-1-5.center);
  \coordinate (pushxthree) at ([xshift=1.5pt,yshift=4.5pt]group\tikzcdmatrixname-3-4.east);
  \draw[blue!65!black,line width=0.6pt,-{Stealth[length=3.5pt,width=3pt]}]
   (pushxthree) -- ([xshift=11.5pt]\tikzcdmatrixname-3-5.center |- pushxthree);
  \draw[blue!65!black,line width=0.6pt,rounded corners=1.5pt,-{Stealth[length=3.5pt,width=3pt]}]
   ([xshift=1.5pt,yshift=-4.5pt]group\tikzcdmatrixname-3-4.east)
   -- ([xshift=-4.5pt,yshift=-4.5pt]\tikzcdmatrixname-3-5.center)
   -- ([xshift=-4.5pt,yshift=-4.5pt]\tikzcdmatrixname-4-5.center)
   -- ([xshift=11.5pt,yshift=-4.5pt]\tikzcdmatrixname-4-5.center);
 }]
\lstick{$M_1:\ket{+}$}
 & & \ctrl{1}\gategroup[4,steps=1,style={densely dotted,inner xsep=6pt,inner ysep=5pt},label style={label position=above,anchor=south}]{$D_b$}
 &[10pt] &[10pt] \ctrl{1}\gategroup[4,steps=1,style={densely dotted,inner xsep=6pt,inner ysep=5pt},label style={label position=above,anchor=south}]{$D_b$} &[8pt] \\
\lstick{$M_2:\ket{+}$}
 & & \control{} & \gate{X} & \control{} & \\
\lstick{$M_3:\ket{+}$}
 & & \ctrl{1} & \gate{X} & \ctrl{1} & \\
\lstick{$M_4:\ket{+}$}
 & & \control{} & & \control{} &
\end{quantikz}}
 & \(=\) &
{\begin{quantikz}[thin lines,row sep={0.66cm,between origins},column sep=0.24cm,align equals at=2.5]

 & \ctrl{1}
 & \ctrl{1} & \gate{Z} & \\

 & \control{} & \control{} & \gate{X} & \\

 & \ctrl{1}
 & \ctrl{1} & \gate{X} & \\

 & \control{} & \control{} & \gate{Z} &
\end{quantikz}}
 & \(=\) &
{\begin{quantikz}[thin lines,row sep={0.66cm,between origins},column sep=0.24cm,align equals at=2.5]
 & & \gate{Z} & \rstick{$\ket{u_2}_X$} \\
 & & & \rstick{$\ket{u_1}_X$} \\
 & & & \rstick{$\ket{u_4}_X$} \\
 & & \gate{Z} & \rstick{$\ket{u_3}_X$}
\end{quantikz}}
\end{tabular}
\caption{Formal decomposition of the total decoder unitary applied to the initial memory state \(\ket{+}^{\otimes4}\), for \(D_b=\mathrm{CZ}_{12}\mathrm{CZ}_{34}\) and \(u=0110\). The arrows show the \(X\) operators being pushed through the CZ gates and generating \(Z\) on the other qubit of each pair. Dotted boxes identify the two \(D_b\) layers. The leftover \(X\) operators act trivially on \(\ket{+}\), leaving \(\ket{Pu}_X=\ket{1001}_X\). Note that the displayed gates describe the circuit's net action; they are not additional physical operations.}
\label{fig:decomp}
\end{figure*}

However, it remains to choose \(b\) and a suitable interaction strength so that \(D_b\) realizes the desired CZ circuit. Given any choice of $b$, both $D_b$ and the CZ circuit are diagonal in the computational basis, so applying either operator to \(\ket{z}\) multiplies it by a phase factor. Our goal is to match these factors for every \(z\), up to a common global phase. We first examine the action of $D_b$ (as provided in Eq. (\ref{eq:mask-diagonal-unitary})) on a memory state $\ket{z}$:
\begin{equation}
\label{eq:offset-diagonal-action}
D_b\ket{z}=\exp\!\left[\frac{iS\theta}{2}\sum_{x\ne0}\bigl(b(x)\oplus x\cdot z\bigr)\right]\ket{z}.
\end{equation}
We will next compute the phase factor obtained by applying the CZ circuit to the same \(\ket{z}\); thereafter, we will rewrite this factor in the same form as the above exponential, so that comparing the two expressions identifies the \(b\) and interaction strength $S\theta$ required for $D_b$ to equal the CZ circuit. Thus observe that for a single CZ gate, \(\mathrm{CZ}_{1,2}\ket{z_1z_2}=(-1)^{z_1z_2}\ket{z_1z_2}\), since CZ contributes a minus sign exactly when both bits are 1. The full CZ circuit therefore maps \(\ket{z}\) to \((-1)^{q(z)}\ket{z}\), where
\begin{equation}
\label{eq:paired-quadratic-function}
q(z)=z_1z_2\oplus z_3z_4\oplus\cdots\oplus z_{k-1}z_k
\end{equation}
is a quadratic Boolean function. 

Fortunately, it turns out that this function has a special and very useful property: \((-1)^{q(z)}\) equals its own normalized Walsh--Hadamard transform,
\begin{equation}
\label{eq:quadratic-fourier-identity}
(-1)^{q(z)}=2^{-k/2}\sum_{x\in\mathbb{F}_2^k}(-1)^{q(x)\oplus x\cdot z}.
\end{equation}
The left side is the phase factor acquired by \(\ket{z}\) under the full CZ circuit, while the right side contains \(q(x)\oplus x\cdot z\), which has the same form as the \(b(x)\oplus x\cdot z\) in the exponent defining \(D_b\) in Eq.~\eqref{eq:offset-diagonal-action}. This correspondence suggests that choosing \(b(x)=q(x)\) may make $D_b$ equal to the CZ circuit. To verify this, apply the identity \((-1)^t=1-2t\) (which holds for any bit \(t\in\{0,1\}\)), to both sides of the equation above:
\begin{equation}
\label{eq:quadratic-fourier-bits}
\begin{aligned}
1-2q(z)
&=2^{-k/2}\sum_{x\in\mathbb{F}_2^k}\left[1-2\bigl(q(x)\oplus x\cdot z\bigr)\right]\\
&=2^{k/2}-2^{1-k/2}\sum_{x\ne0}\bigl(q(x)\oplus x\cdot z\bigr).
\end{aligned}
\end{equation}
Here the \(x=0\) term in the bit sum vanishes because \(q(0)=0\). Rearranging gives
\begin{equation}
\label{eq:quadratic-mask-weight}
q(z)=\frac{1-2^{k/2}}{2}+2^{-k/2}\sum_{x\ne0}\bigl(q(x)\oplus x\cdot z\bigr).
\end{equation}
Thus, writing \((-1)^{q(z)}=e^{i\pi q(z)}\), the phase factor acquired by \(\ket{z}\) under the CZ circuit becomes
\begin{equation}
\label{eq:quadratic-cz-phase}
\begin{aligned}
(-1)^{q(z)}
={}&e^{i\pi(1-2^{k/2})/2}\exp\!\left[\frac{i\pi}{2^{k/2}}\sum_{x\ne0}\bigl(q(x)\oplus x\cdot z\bigr)\right].
\end{aligned}
\end{equation}
The first factor is independent of \(z\). Matching the second to the phase factor produced by \(D_b\) in Eq.~\eqref{eq:offset-diagonal-action} gives
\begin{equation}
\label{eq:quadratic-decoder-choice}
b(x)=q(x),\qquad \theta=\frac{2\pi}{S\,2^{k/2}}.
\end{equation}
With this choice, \(D_b\) equals the desired CZ circuit up to a global phase. For fixed \(S\), the interaction strength \(S\theta\) tends to zero as \(k\) grows.

We have therefore shown that, for an \([n,k]\) simplex code with even \(k\), this QQ decoder can perfectly identify every message in the absence of transmission errors. We use \(S\) sensing qubits and \(k\) memory qubits, for a total of \(Q=k+S\) qubits. The total strength, that is, the signal strength $\theta$ times the \textit{total} number of qubits available, is then
\begin{equation}
\label{eq:qq-total-strength}
\Phi=(k+S)\theta=\frac{2\pi(k+S)}{S2^{k/2}}=\frac{2\pi(k+S)}{S\sqrt{n+1}}.
\end{equation}
For even \(k\ge6\) and choosing $S=k$, this value of $\Phi$ is below \(\pi\), in the regime where perfect CC and QC decoding are impossible. Thus, the generalized circuit makes better use of the phase available across successive signal pulses. As the code grows, there are more time slots over which phase can accumulate, allowing both the signal strength and the total strength to decrease while the total number of qubits grows only logarithmically with \(n\). 

Figure~\ref{fig:mask-overlap} gives a fully worked example of the decoder instance depicted in Figure~\ref{fig:decomp}. Starting from the message \(u=0110\), it shows its encoding as a codeword $c_u$ and also lists each mask $f_z$ associated with a memory basis state \(\lvert z\rangle\). For a particular $\ket{z}$, the number \(N_z = \sum_xf_z(x)c_u(x)\) of positions where both the mask and codeword equal 1 determines the phase accumulated by that basis state. Together, these phases produce the sign pattern of the desired output \(\lvert Pu\rangle_X=\lvert1001\rangle_X\).
\begin{figure}[t]
\centering

\providecommand{\maskOverlapCaption}{}
\renewcommand{\maskOverlapCaption}{%
Masks for the $[15,4]$ simplex code with message $u=0110$ and offset
$b(x)=x_1x_2\oplus x_3x_4$. \textbf{(a)} Columns are time slots whose labels form the
generator matrix; $c_u$ is the sum of the highlighted rows. \textbf{(b)} Row $\ket{z}$
shows the mask $f_z=c_z\oplus b$: shaded cells activate the sensor, and
teal cells also coincide with a signal pulse. The count
$N_z=\sum_x f_z(x)c_u(x)$ is the ordinary integer dot product of the mask $f_z$ and codeword $c_u$. Each teal
cell adds phase $S\theta=\pi/2$ to $\ket{z}$, so
$s_z=e^{iS\theta N_z}$ is the sign acquired by $\ket{z}$.
These signs produce the final memory state $\ket{Pu}_X=\ket{1001}_X$.}

\begin{tikzpicture}[
  x=0.40cm,
  y=-0.40cm,
  scale = 0.95,
  font=\scriptsize,
  every node/.style={inner sep=0pt},
  cell/.style={draw=black!45,line width=0.3pt,fill=white},
  one/.style={draw=mskTealD,line width=0.45pt,fill=mskTeal},
  off/.style={draw=black!12,line width=0.3pt,fill=white},
  on/.style={draw=mskPurpD,line width=0.5pt,fill=mskPurp!70},
  hit/.style={draw=mskTealD,line width=0.45pt,fill=mskTeal},
  offs/.style={draw=black!50,line width=0.45pt,fill=black!14},
  lab/.style={anchor=east,text=black!80}
]

\definecolor{mskTeal}{HTML}{CFEDE6}
\definecolor{mskTealD}{HTML}{1F7A63}
\definecolor{mskPurp}{HTML}{E6DEF5}
\definecolor{mskPurpD}{HTML}{5B3F9A}

\def\mskU{6}    
\def\mskC{0.84} 

\def\mskbit#1#2#3{%
  \pgfmathtruncatemacro#1{mod(floor(#2/2^(4-#3)),2)}%
}

\node[anchor=south west,font=\small\bfseries]
  at (-3.2,0.2) {(a)};
\foreach \mski in {1,...,4}{
  \mskbit\mskui\mskU\mski

  \ifnum\mskui=1
    \fill[mskTeal!55]
      (0.5,\mski-0.5) rectangle (15.5,\mski+0.5);
    \def\mskst{one}
  \else
    \def\mskst{cell}
  \fi

  \draw[\mskst]
    (-1.5-\mskC/2,\mski-\mskC/2)
    rectangle ++(\mskC,\mskC);
  \node at (-1.5,\mski) {\mskui};

  \foreach \mskx in {1,...,15}{
    \mskbit\mskxi\mskx\mski
    \node[text=black!55] at (\mskx,\mski) {\mskxi};
  }
}

\node[anchor=south] at (-1.5,0.2) {$u$};
\node[anchor=south,text=black!70] at (8,0.2) {time slot $x$};

\foreach \mskx in {1,...,15}{
  \mskbit\mxa\mskx1
  \mskbit\mxb\mskx2
  \mskbit\mxc\mskx3
  \mskbit\mxd\mskx4

  \mskbit\mskua\mskU1
  \mskbit\mskub\mskU2
  \mskbit\mskuc\mskU3
  \mskbit\mskud\mskU4

  \pgfmathtruncatemacro\mskcu{
    mod(\mskua*\mxa+\mskub*\mxb+\mskuc*\mxc+\mskud*\mxd,2)
  }
  \pgfmathtruncatemacro\mskbx{
    mod(\mxa*\mxb+\mxc*\mxd,2)
  }

  \ifnum\mskcu=1
    \def\mskst{one}
  \else
    \def\mskst{cell}
  \fi

  \draw[\mskst]
    (\mskx-\mskC/2,5.6-\mskC/2)
    rectangle ++(\mskC,\mskC);
  \node at (\mskx,5.6) {\mskcu};

  \ifnum\mskbx=1
    \def\mskst{offs}
  \else
    \def\mskst{cell}
  \fi

  \draw[\mskst]
    (\mskx-\mskC/2,6.7-\mskC/2)
    rectangle ++(\mskC,\mskC);
  \node at (\mskx,6.7) {\mskbx};
}

\node[lab] at (0.25,5.6) {$c_u$};
\node[lab] at (0.25,6.7) {$b$};

\begin{scope}[shift={(0,2.0)}]

\node[anchor=south west,font=\small\bfseries]
  at (-3.2,8.45) {(b)};

\node[anchor=south,text=black!70] at (8,8.45)
  {mask $f_z=c_z\oplus b$};
\node[anchor=south] at (16.55,8.45) {$N_z$};
\node[anchor=south] at (17.75,8.45) {$s_z$};

\node[rotate=90,text=black!70] at (-3.2,15.47)
  {memory basis states $\ket{z}$};

\foreach \mskz in {0,...,15}{
  \pgfmathsetmacro\msky{8.75+\mskz*0.84+0.42}
  \gdef\mskN{0}

  \mskbit\mskza\mskz1
  \mskbit\mskzb\mskz2
  \mskbit\mskzc\mskz3
  \mskbit\mskzd\mskz4

  \node[lab,text=mskPurpD] at (0.25,\msky)
    {$\ket{\mskza\mskzb\mskzc\mskzd}$};

  \foreach \mskx in {1,...,15}{
    \mskbit\mxa\mskx1
    \mskbit\mxb\mskx2
    \mskbit\mxc\mskx3
    \mskbit\mxd\mskx4

    \mskbit\mskua\mskU1
    \mskbit\mskub\mskU2
    \mskbit\mskuc\mskU3
    \mskbit\mskud\mskU4

    \pgfmathtruncatemacro\mskcu{
      mod(\mskua*\mxa+\mskub*\mxb+\mskuc*\mxc+\mskud*\mxd,2)
    }
    \pgfmathtruncatemacro\mskf{
      mod(
        \mskza*\mxa+\mskzb*\mxb+\mskzc*\mxc+\mskzd*\mxd
        +\mxa*\mxb+\mxc*\mxd,
        2
      )
    }

    \ifnum\mskf=1
      \ifnum\mskcu=1
        \xdef\mskN{\the\numexpr\mskN+1\relax}
        \draw[hit]
          (\mskx-\mskC/2,\msky-\mskC*0.42)
          rectangle ++(\mskC,\mskC*0.84);
      \else
        \draw[on]
          (\mskx-\mskC/2,\msky-\mskC*0.42)
          rectangle ++(\mskC,\mskC*0.84);
      \fi
    \else
      \draw[off]
        (\mskx-\mskC/2,\msky-\mskC*0.42)
        rectangle ++(\mskC,\mskC*0.84);
    \fi
  }

  \node at (16.55,\msky) {\mskN};

  \pgfmathtruncatemacro\msks{mod(\mskN,4)}
  \ifnum\msks=0
    \node[text=mskTealD] at (17.75,\msky) {$+$};
  \else
    \node[text=mskPurpD] at (17.75,\msky) {$-$};
  \fi
}

\draw[black!45,line width=0.4pt]
  (18.3,8.82) -- (18.5,8.82)
  -- (18.5,22.13) -- (18.3,22.13);

\node[anchor=north] at (7.5,23.0) {$
  \underbrace{
    \sum_z\ket{z}
  }_{\substack{\text{initial}\\\text{state}}}
  \;\xrightarrow{\text{decoder}}\;
  \underbrace{
    \sum_z s_z\ket{z}
    =\sum_z(-1)^{Pu\cdot z}\ket{z}
    =\ket{Pu}_X
    =\ket{1001}_X
  }_{\text{\makebox[0pt]{final memory state}}}
$};

\end{scope}
\end{tikzpicture}%
\caption{\maskOverlapCaption}
\label{fig:mask-overlap}
\end{figure}

An interesting feature of this decoding scheme is that the memory outputs the bitstring \(\ket{Pu}_X\), rather than the true message string \(\ket{u}_X\). However, since \(P\) is invertible, the measured string \(Pu\) still allows us to recover \(u\) exactly. One might ask whether other choices of \(P\) can be realized through a suitable choice of \(b\), allowing equivalent decoding at the same total strength \(\Phi\). This is indeed possible, and will be useful when we discuss how to handle transmission errors. The following lemma formalizes this statement.

\begin{lemma}
\label{lem:general-quadratic-decoder}
Let \(k\) be an even integer and let \(P\in\mathbb{F}_2^{k\times k}\) be invertible and symmetric, with zero diagonal. Define
\begin{equation}
\label{eq:general-quadratic-offset}
b(x)=\bigoplus_{1\le i<j\le k}(P^{-1})_{ij}x_i x_j,
\end{equation}
for every \(x\in\mathbb{F}_2^k\), where the inverse of $P$ is evaluated over \(\mathbb{F}_2\). Then for the \([n,k]\) simplex code with signal strength \(\theta=2\pi/(S2^{k/2})\), the final state of the memory register of the above QQ decoder is \(\ket{Pu}_X\) (up to a global phase) for every message \(u\in\mathbb{F}_2^k\) in the absence of transmission errors.
\end{lemma}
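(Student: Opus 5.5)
The plan is to repeat the argument that produced Eqs.~\eqref{eq:paired-quadratic-function}--\eqref{eq:quadratic-decoder-choice}, replacing the paired function \(q\) by the general quadratic form \(b\) of Eq.~\eqref{eq:general-quadratic-offset}. Write \(A=P^{-1}\) and \(b(x)=\bigoplus_{i<j}A_{ij}x_ix_j\).

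\textbf{Step 1: \(A\) is alternating.} First we check that \(A\) is symmetric with zero diagonal, i.e.\ alternating. Symmetry is inherited from \(P\). For the diagonal, note that \(x^TPx=0\) for all \(x\), since \(P\) has zero diagonal and is symmetric. Setting \(y=Ax\) gives \(x^TAx=y^TPy=0\) for every \(x\). It follows that the polar form of \(b\) is
\begin{equation*}
b(x\oplus y)\oplus b(x)\oplus b(y)=x^TAy,
\end{equation*}
which is nondegenerate because \(A\) is invertible.

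\textbf{Step 2: the Walsh transform of \((-1)^{b}\).} Fix \(z\) and let \(w=Pz\), so that \(Aw=z\). Substitute \(x=y\oplus w\) in \(W(z)=\sum_x(-1)^{b(x)\oplus x\cdot z}\). The cross terms give \(y^TAw\oplus y\cdot z=0\), and \(w\cdot z=w^TAw=0\) by Step 1. Hence
\begin{equation*}
W(z)=(-1)^{b(Pz)}\,W(0).
\end{equation*}
The value \(W(0)\) is an integer. Parseval gives \(\sum_z W(z)^2=4^k\), and since all \(|W(z)|\) are equal, \(W(0)^2=2^k\), so \(W(0)=\epsilon 2^{k/2}\) with \(\epsilon=\pm1\) (here \(k\) even is used). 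Define \(g(z)=b(Pz)\). Because \(x_i^2=x_i\) over \(\mathbb{F}_2\), \(g\) is a quadratic Boolean function whose polar form is \(z^TPAPz'=z^TPz'\), possibly plus a linear term \(\ell\cdot z\). Thus
\begin{equation*}
(-1)^{g(z)}=\epsilon\, 2^{-k/2}\sum_x(-1)^{b(x)\oplus x\cdot z},
\end{equation*}
which generalizes Eq.~\eqref{eq:quadratic-fourier-identity}.

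\textbf{Step 3: identifying \(D_b\).} Apply \((-1)^t=1-2t\) as in Eqs.~\eqref{eq:quadratic-fourier-bits}--\eqref{eq:quadratic-cz-phase}. The \(x=0\) term drops out since \(b(0)=0\). If \(\epsilon=-1\), we replace \(g\) by \(g\oplus1\), which changes only a global phase. This shows that with \(\theta=2\pi/(S2^{k/2})\), the operator \(D_b\) of Eq.~\eqref{eq:mask-diagonal-unitary} equals \(\mathrm{diag}\bigl((-1)^{g(z)}\bigr)\) up to a global phase.

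\textbf{Step 4: the final state.} By Eq.~\eqref{eq:memory-message-pattern}, the final state is \(D_bX[u]D_b^\dagger\ket{+}^{\otimes k}\). On \(\ket{z}\), conjugating \(X[u]\) by \(D_b\) produces the phase
\begin{equation*}
(-1)^{g(z)\oplus g(z\oplus u)}=(-1)^{g(u)\oplus z^TPu}.
\end{equation*}
The linear part \(\ell\cdot z\) cancels between the two terms, and \(g(u)\) is a \(z\)-independent sign. Hence \(D_bX[u]D_b^\dagger=\pm Z[Pu]X[u]\), and acting on \(\ket{+}^{\otimes k}\) this yields \(\ket{Pu}_X\) up to global phase.

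\textbf{Main obstacle.} The main step is the Walsh-transform identity of Step 2. Two points need care there: the alternating property of \(A\), which removes the \(w\cdot z\) term, and the fact that \(g\) may carry a linear part. The latter turns out harmless because it cancels in the conjugation of Step 4.
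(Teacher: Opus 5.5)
Your proposal is correct and follows essentially the same route as the paper: the alternating property of $P^{-1}$, the polar identity for $b$, the shift by $Pz$ giving $\sum_x(-1)^{b(x)\oplus x\cdot z}=(-1)^{b(Pz)}\sum_x(-1)^{b(x)}$, matching $D_b$ to $\mathrm{diag}\bigl((-1)^{b(Pz)}\bigr)$ up to a global phase via $(-1)^t=1-2t$, and then the polar identity again to obtain $\ket{Pu}_X$. The only small difference is that you fix $|W(0)|=2^{k/2}$ using Parseval together with the constancy of $|W(z)|$, whereas the paper squares $\sum_x(-1)^{b(x)}$ directly and applies character orthogonality; the two arguments are equally short.
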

\begin{proof}
See Appendix~\ref{app:general-quadratic-decoder}.
\end{proof}

We note an intriguing connection between this QQ decoder and \emph{binary bent functions}. Every \(b\) given by Lemma~\ref{lem:general-quadratic-decoder} is bent, meaning that the Walsh--Hadamard transform of \((-1)^{b(x)}\) has constant magnitude:
\begin{equation}
\label{eq:bent-offset-definition}
\left|\sum_{x\in\mathbb{F}_2^k}(-1)^{b(x)\oplus x\cdot z}\right|=2^{k/2}
\end{equation}
for every \(z\in\mathbb{F}_2^k\). For example, Eq.~\eqref{eq:quadratic-fourier-identity} immediately shows that \(b(x)=q(x)\) in our worked example is bent. Bent functions are well studied in coding theory and cryptography, with applications to code construction and the design of cryptographic Boolean functions \cite{rothaus1976bent,carlet2021boolean}. 

\subsection{Simplex QQ decoding with transmission errors}\label{sec:trans-errors}
The QQ decoder above can perfectly identify uncorrupted simplex codewords at total strengths too weak for CC and QC protocols to do the same. We now extend this construction to preserve perfect decoding when up to a fixed number \(t\) of the incoming codeword bits are flipped. The first step is to understand how a transmission bit flip propagates through the decoder circuit. Its effect on the memory is simple: for an underlying message \(u\), a flip at position \(x'\) changes the final state from \(\ket{Pu}_X\) to a superposition of \(\ket{Pu}_X\) and \(\ket{Pu\oplus x'}_X\).

To understand this fact, suppose we receive a corrupted bitstring \(r\), equal to the clean codeword $c_u$ but with a single bit flip at position \(x'\). Letting \(U_{\rm err}\) and $U_{\rm tot}$ be the total unitary operations of the circuit on the memory register due to reception of the corrupted $r$ and clean $c_u$, respectively, we now find an expression for $U_{\rm err}U_{\rm tot}^\dagger$, which captures the component of the evolution due to the single bit flip:
\begin{equation}
  \underbrace{U_{\rm err}}_{\rm corrupted} = \underbrace{\left(U_{\rm err}U_{\rm tot}^\dagger\right)}_{\rm bitflip}\underbrace{U_{\rm tot}}_{\rm clean}.
\end{equation} 
Recall Eq.~\eqref{eq:memory-overlap-action} gives the action of $U_{\rm tot}$ on a memory state $\ket{z}$; this equation also gives the action of $U_{\rm err}$, provided that we replace the clean codeword bits $c_u(x)$ with the corrupted bits $r(x)$. Thus, the action of $U_{\rm err}U_{\rm tot}^\dagger$ on a basis state $\ket{z}$ can be described as 
\begin{equation}
\label{eq:single-error-phase-difference}
\begin{aligned}
&U_{\rm err}U_{\rm tot}^\dagger\ket{z}\\
&\quad=\exp\left(iS\theta\sum_{x\ne0}f_z(x)[r(x)-c_u(x)]\right)\ket{z}\\
&\quad=\exp\left(iS\theta f_z(x')\bigl([1-c_u(x')]-c_u(x')\bigr)\right)\ket{z}\\
&\quad=\exp\left(iS\theta (-1)^{c_u(x')}f_z(x')\right)\ket{z}
\end{aligned}
\end{equation}
The last equality uses the fact that the bracketed bit difference is \(+1\) for a \(0\to1\) flip and \(-1\) for a \(1\to0\) flip. Now define \(Z[x']=\bigotimes_i Z^{x_i'}\), whose eigenvalue on \(\ket{z}\) is \((-1)^{x'\cdot z}\). Using \(f_z(x')=[1-(-1)^{b(x')}(-1)^{x'\cdot z}]/2\) and dropping the constant global phase gives
\begin{equation}
\label{eq:single-error-hamiltonian}
\begin{aligned}
U_{\rm err}U_{\rm tot}^{\dagger}
&=e^{-i(S\theta/2)H_{\rm err}},\\
H_{\rm err}&=(-1)^{b(x')\oplus c_u(x')}Z[x'].
\end{aligned}
\end{equation}
Because \(Z[x']\) changes an \(X\)-basis label by \(x'\), repeated applications of \(H_{\rm err}\) alternate between \(\ket{Pu}_X\) and \(\ket{Pu\oplus x'}_X\), up to a sign. Its exponential can therefore only produce a superposition of these two states.

Now that we know how transmission errors propagate through the circuit, we can discuss how to correct them. A single decoder cannot guarantee recovery of the message: its measured output may be shifted from \(Pu\), with no indication that a shift has occurred. The trick is to introduce redundancy by running \(B\) decoding circuits in parallel on the same received bitstring, using \(B\) memory registers of \(k\) qubits and \(B\) sensor registers of \(S\) qubits. However, simply repeating the same circuit produces identical copies of its output state. Although their measurement outcomes need not agree, and repetition can improve the measurement statistics, it does not guarantee that the underlying message can be identified without error.

Fortunately, Lemma~\ref{lem:general-quadratic-decoder} lets us choose a different transformation of the message for each decoder. We choose invertible, symmetric matrices \(P_1,\ldots,P_B\) with zero diagonal, and use the lemma to choose a corresponding offset function \(b_i(x)\) for each circuit. In the absence of transmission errors, decoder \(i\) then outputs \(\ket{P_i u}_X\), so that the parallel circuits encode the same message in different ways.

We now claim that, provided the number of transmission bit flips is bounded and the matrices \(P_i\) are chosen appropriately, the final measurements of the memory registers determine the underlying codeword unambiguously. The following lemma gives a sufficient condition.

\begin{lemma}
\label{lem:parallel-decoder-error-correction}
Let \(k\) be even, let \(t\ge0\) be an integer, and suppose \(B\ge2t+1\). Let \(P_1,\ldots,P_B\in\mathbb{F}_2^{k\times k}\) be invertible, symmetric matrices with zero diagonal satisfying
\begin{equation}
\label{eq:parallel-decoder-span-condition}
\dim_{\mathbb{F}_2}\operatorname{span}\{P_1w,\ldots,P_Bw\}>2t
\end{equation}
for every nonzero \(w\in\mathbb{F}_2^k\). Construct \(B\) parallel QQ decoders with these matrices according to Lemma~\ref{lem:general-quadratic-decoder}, at signal strength \(\theta=2\pi/(S2^{k/2})\). If all decoders receive the same simplex codeword with at most \(t\) flipped bits, their final \(X\)-basis measurement outcomes determine the transmitted message with zero error. With \(k\) memory qubits and \(S\) sensor qubits per decoder, this scheme uses \(Q=B(k+S)\) qubits and total strength
\begin{equation}
\label{eq:parallel-decoder-collective-strength}
\Phi=\frac{2\pi B(k+S)}{S2^{k/2}} = \frac{2\pi B(k+S)}{S\sqrt{n+1}}.
\end{equation}
\end{lemma}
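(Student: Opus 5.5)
The plan is to reduce the statement to a purely linear-algebraic uniqueness argument. First I determine the support of each decoder's measurement distribution. Then I show that the span condition \eqref{eq:parallel-decoder-span-condition} forces at most one message to be consistent with the observed outcomes.

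\emph{Step 1 (support of each decoder's outcome).} Let the received string be \(r=c_u\oplus e\) with \(e\) supported on positions \(x'_1,\ldots,x'_m\), where \(m\le t\). Consider decoder \(i\), with offset \(b_i\) given by Lemma~\ref{lem:general-quadratic-decoder}. Since every \(U\) here is diagonal, Eq.~\eqref{eq:memory-overlap-action} is linear in the received bits, so the computation of Eq.~\eqref{eq:single-error-phase-difference} generalizes directly. It gives
\begin{equation*}
U^{(i)}_{\rm err}U^{(i)\dagger}_{\rm tot}=\prod_{j=1}^{m}e^{-i(S\theta/2)(-1)^{b_i(x'_j)\oplus c_u(x'_j)}Z[x'_j]}
\end{equation*}
up to a global phase. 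Each factor equals \(\cos(S\theta/2)I\mp i\sin(S\theta/2)Z[x'_j]\). Expanding the product therefore gives a linear combination of operators \(Z[s]\) with \(s\in E:=\operatorname{span}\{x'_1,\ldots,x'_m\}\), and \(\dim E\le t\). Lemma~\ref{lem:general-quadratic-decoder} gives \(U^{(i)}_{\rm tot}\ket{+}^{\otimes k}=\ket{P_iu}_X\), and \(Z[s]\ket{v}_X=\ket{v\oplus s}_X\). Hence the final memory state of decoder \(i\) is supported on \(\{\ket{P_iu\oplus s}_X: s\in E\}\). The \(X\)-basis outcome \(y_i\) therefore satisfies \(y_i\oplus P_iu\in E\) with certainty.

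The \(B\) decoders use disjoint memory and sensor registers, and each sensor is reset after every slot. So the joint state is a product of these \(B\) states, and the joint outcomes are all determined with respect to the same subspace \(E\). This sharing of \(E\) across all decoders is the key structural point.

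\emph{Step 2 (consistency and uniqueness).} Call a message \(u'\) \emph{consistent} with outcomes \((y_1,\ldots,y_B)\) if there is a subspace \(E'\) with \(\dim E'\le t\) such that \(y_i\oplus P_iu'\in E'\) for all \(i\). Equivalently, \(\dim\operatorname{span}\{y_i\oplus P_iu'\}_i\le t\). By Step 1, the true \(u\) is always consistent.

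Suppose \(u'\neq u\) were also consistent, with subspace \(E'\), and set \(w=u\oplus u'\neq0\). Then \(P_iw=(y_i\oplus P_iu)\oplus(y_i\oplus P_iu')\in E+E'\) for every \(i\). This gives \(\dim\operatorname{span}\{P_1w,\ldots,P_Bw\}\le\dim(E+E')\le 2t\), contradicting \eqref{eq:parallel-decoder-span-condition}. So the decoding map, which outputs the unique consistent message (found, for instance, by exhaustive search over \(\mathbb{F}_2^k\)), succeeds with zero error.

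The hypothesis \(B\ge 2t+1\) is needed for the span condition to be satisfiable at all, since the span of \(B\) vectors has dimension at most \(B\). It plays no further role in the argument.

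\emph{Step 3 (resources).} Each decoder uses \(k\) memory and \(S\) sensor qubits, so \(Q=B(k+S)\). Substituting \(\theta=2\pi/(S2^{k/2})\) and \(n+1=2^k\) gives \eqref{eq:parallel-decoder-collective-strength}.

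The main obstacle is Step 1. One must verify carefully that the multi-error operator is a combination of \(Z[s]\) with \(s\) in a single subspace \(E\) that is common to all decoders: only the signs \((-1)^{b_i(x'_j)}\) depend on \(i\), not the supports. One must also verify that these operators commute, so that the single-flip formula composes.
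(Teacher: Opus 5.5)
Your proposal is correct and follows essentially the same route as the paper. Both show that every decoder's outcome satisfies $y_i\oplus P_iu\in\operatorname{span}\{x'_1,\ldots,x'_m\}$, accept any candidate with $\dim\operatorname{span}\{y_i\oplus P_iu'\}\le t$, and rule out a second passing candidate because the $P_i(u\oplus u')$ would then span a space of dimension at most $2t$. Your explicit expansion of the commuting product of $Z[x'_j]$ exponentials is simply a more careful version of the paper's ``apply the single-flip argument successively'' step.
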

We can now justify Lemma~\ref{lem:parallel-decoder-error-correction} by demonstrating that the following naive decoding protocol works: simply search through all \(2^k=n+1\) possible messages and ``test" which one fits the final $X$-basis measurement outcomes obtained from each memory register using the procedure we describe below. Although this brute-force search is exponential in \(k\), it is polynomial in \(n\) and \(B\); it serves to establish the lemma, and we revisit more efficient decoding schemes later. Suppose the memory measurements return the bitstrings \(y_1,\ldots,y_B\) and we propose a candidate message \(u\). Then since the ideal outputs should be \(P_1u,\ldots,P_Bu\), we can interpret our actual measurement results as $y_j = P_ju \oplus e_j$ for some implied error bitstrings $e_j = y_j \oplus P_ju$. Note that if \(u\) were correct and a single bit were flipped at \(x'\), each error $e_j$ would necessarily either be \(0\) or \(x'\), since each memory register $j$ would have been in a superposition of $\ket{P_j u}_X$ and $\ket{P_j u\oplus x'}_X$ prior to measurement. More generally, with up to $t$ flips, we could apply the single-flip argument successively to see that every error $e_j$ would necessarily be a binary sum of the flipped positions. Thus, the observed errors for the true message must span a subspace of dimension at most \(t\). We consequently accept a candidate \(u\) if
\begin{equation}
\label{eq:parallel-decoder-candidate-test}
\dim_{\mathbb{F}_2}\operatorname{span}\{y_1\oplus P_1u,\ldots,y_B\oplus P_Bu\}\le t,
\end{equation}
that is, if the errors implied by the candidate span a subspace of dimension $\leq t$.

The true message clearly passes this test, but we must show that no incorrect message can also pass. Suppose \(u\) is the true message and \(v\ne u\) is another candidate that passes. For every register \(j\), adding the two implied errors cancels the observed bitstring $y_j$:
\begin{equation}
\label{eq:parallel-decoder-candidate-difference}
(y_j\oplus P_ju)\oplus(y_j\oplus P_jv)=P_j(u\oplus v).
\end{equation}
Because both candidates pass, the implied errors for \(u\) can be spanned by at most \(t\) vectors per Eq. (\ref{eq:parallel-decoder-candidate-test}), and likewise for \(v\). Together, these two sets of at most \(t\) spanning vectors span every \(P_j(u\oplus v)\). The \(P_j(u\oplus v)\) therefore span a space of dimension at most \(2t\), contradicting Eq.~\eqref{eq:parallel-decoder-span-condition} for the nonzero string \(u\oplus v\). Thus the true message is the unique candidate that passes, proving Lemma~\ref{lem:parallel-decoder-error-correction}.

It remains to show that the required matrices \(P_j\) satisfying Eq.~\eqref{eq:parallel-decoder-span-condition} can actually be constructed. Related families of matrices with nonsingular pairwise differences arise in the study of Kerdock codes~\cite{kantor1992kerdockplanes}. A simple finite-field construction suffices. Write \(m=k/2\) and represent each element of \(\mathbb{F}_{2^m}\) by its \(m\) coordinates in a fixed binary basis. Multiplication by a field element \(a\) is then represented by an \(m\times m\) binary matrix \(M_a\). Since nonzero field elements have multiplicative inverses, \(M_a\) is invertible for \(a\ne0\); moreover, \(M_{a+b}=M_a+M_b\). Now define
\begin{equation}
\label{eq:finite-field-decoder-matrices}
P(a)=\begin{pmatrix}
0&M_a\\
M_a^{\mathsf T}&0
\end{pmatrix}.
\end{equation}
For \(a\ne0\), this matrix is invertible, symmetric, and has zero diagonal. Choose \(B=2t+1\) field elements \(a_1,\ldots,a_B\) linearly independent over \(\mathbb{F}_2\), which is possible when \(B\le m\), and take \(P_j=P(a_j)\). Any nonzero binary linear combination of the \(P_j\) equals \(P(a)\) for a nonzero linear combination \(a\) of the \(a_j\), and is therefore invertible. Consequently, for nonzero \(w\), no nonzero binary combination of \(P_1w,\ldots,P_Bw\) can vanish. Their span therefore has dimension \(B=2t+1>2t\), as required.

Finally, this finite-field construction permits more efficient classical decoding of the measurement readouts. After a fixed invertible change of binary coordinates, the ideal tuple \((P_1u,\ldots,P_Bu)\) has the form \((a_1\widetilde{u},\ldots,a_B\widetilde{u})\) over \(\mathbb{F}_{2^k}\), where \(\widetilde{u}\) represents the message and the \(a_j\) lie in its subfield \(\mathbb{F}_{2^m}\). These tuples form a length-\(B\), dimension-one Gabidulin code with minimum rank distance \(B\)~\cite{gabidulin1985rankdistance}. The errors measured have rank at most \(t\), which is preserved by the coordinate change. Standard Gabidulin decoders therefore recover the message from the readouts in time polynomial in \(k\), since \(B=2t+1\le k/2\)~\cite{silva2009fastgabidulin}.

\section{Quantum Advantage for Quantum Codeword Sensing}
\label{sec:theorems}

Having now established general bounds on the success probabilities of classical memory receivers, as well as provided a specific QQ protocol, we now show that there exist parameter regimes in which there is a separation between QQ and classical memory protocol success probability. In the first section, we prove that in the limit of large code size, and for specifically chosen signal strengths and qubit budgets, the success probability of a QQ receiver is either always one (for few enough transmission errors) or approaches one (for a constant fraction of adversarial transmission errors, or independent stochastic errors with small enough probability). For the parameter choices made in Theorem~\ref{thm:perfect-decoding} and Remark~\ref{rem:remark}, the success probability of a classical memory receiver tends to zero.

We then turn to analyzing the performance of receivers in the finite \(k\) regime. We find that using \(40\) qubits and with small enough transmission error probability, our QQ protocol for the simplex code can appreciably outperform a concrete QC receiver, and at \(80\) qubits, can outperform the bound given in Proposition~\ref{prop:qc-stochastic}, and thus any QC receiver.

\subsection{Asymptotic Advantage}
\label{sec:asymptotic}

We first begin by considering a regime where the QQ protocol succeeds with probability one.

\begin{theorem}
  \label{thm:perfect-decoding}
   Consider the transmission of a simplex code with \(k\) message bits, where \(k\) is even, in which at most \(t\) errors occur, for \(2t + 1 \le k/2\), and each message is sent with the same probability. Using \((2t+1)k\) memory qubits, \((2t+1)S\) sensor qubits, and \(\theta = \frac{2\pi}{S \cdot 2^{k/2}}\), a QQ receiver may determine the message with probability one. For the same number of qubits and \(\theta\), even in the absence of transmission errors, if \(S = \omega((t+1)\sqrt{k})\) and \(t = o(\sqrt{k})\), then the success probability of any QC receiver approaches zero, and if \(S = \omega(\sqrt{k})\), then the success probability of any CC receiver approaches zero.
\end{theorem}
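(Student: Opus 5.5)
The argument has two independent parts: an achievability statement for the QQ receiver, and a vanishing-success bound for QC and CC receivers at the same $Q$ and $\theta$.

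\emph{QQ achievability.} This follows directly from Lemma~\ref{lem:parallel-decoder-error-correction} combined with the finite-field construction of Section~\ref{sec:trans-errors}. Set $m=k/2$ and $B=2t+1\le m$. Choose $a_1,\dots,a_B\in\mathbb{F}_{2^m}$ linearly independent over $\mathbb{F}_2$ and put $P_j=P(a_j)$. Each $P_j$ is then invertible, symmetric, and has zero diagonal, and for every nonzero $w$ the span of $\{P_jw\}$ has dimension $B>2t$. Lemma~\ref{lem:parallel-decoder-error-correction} then says that $B$ parallel decoders, each with $k$ memory and $S$ sensor qubits at $\theta=2\pi/(S2^{k/2})$, identify the message with zero error whenever there are at most $t$ flips. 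This gives exactly $(2t+1)k$ memory qubits and $(2t+1)S$ sensor qubits.

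\emph{Classical-memory converse.} Assume no transmission errors, so $p=0$. Apply the second branch of Proposition~\ref{prop:qc-stochastic}, $\Pr[u=\hat u]\le (1+nC)/k$. The term $1/k$ already tends to zero, so the whole task is to show $nC\to0$.

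For the QC case, set $Q=(2t+1)(k+S)$ and
\[
\Phi=\frac{2\pi(2t+1)(k+S)}{S\,2^{k/2}}.
\]
With $p=0$, $C_{QC}=1-h_2\bigl((1-\sin(\Phi/2))/2\bigr)$. The expansion of $h_2$ near $1/2$ gives $C_{QC}\le c\sin^2(\Phi/2)\le c\,\Phi^2/4$ for a constant $c$ (in fact $c=1/(2\ln 2)$ suffices). This should be checked as a global inequality, $1-h_2\bigl(\tfrac{1-y}{2}\bigr)\le y^2/(2\ln 2)\cdot\text{const}$ for $y\in[0,1]$, and I expect it to hold with a modest constant. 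Since $n<2^k$,
\[
nC_{QC}=O\!\left(\frac{(t+1)^2(k+S)^2}{S^2}\right).
\]
Expanding, $(k+S)^2/S^2=(1+k/S)^2$, so the bound is $O\bigl((t+1)^2(1+k/S)^2\bigr)$. This does not go to zero, because the term $(t+1)^2\cdot 1$ survives. So the naive use of the second branch alone fails, and this is the main obstacle. The fix is to divide by $k$ and use the full bound:
\[
\frac{nC_{QC}}{k}=O\!\left(\frac{(t+1)^2}{k}+\frac{(t+1)^2k}{S^2}\right).
\]
The first term vanishes because $t=o(\sqrt k)$. The second vanishes because $S=\omega((t+1)\sqrt k)$. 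These are exactly the stated hypotheses, so the plan is to track the $1/k$ factor carefully rather than claim $nC\to0$. If the product $\Phi=Q\theta$ exceeds $\pi$ for small $k$, this is irrelevant asymptotically, since $\Phi\to0$ under the hypotheses.

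For the CC case, use $v_Q(\theta)^2=1-\cos^{2Q}(\theta/2)\le Q\theta^2/4$, which follows from $1-(1-x)^Q\le Qx$ with $x=\sin^2(\theta/2)\le\theta^2/4$. Then
\[
nC_{CC}=O(nQ\theta^2)=O\!\left(\frac{(t+1)(k+S)}{S^2}\right),
\]
and dividing by $k$ gives
\[
O\!\left(\frac{t+1}{S^2}+\frac{t+1}{kS}\right).
\]
I expect to need $t+1\le k$ here, which holds since $2t+1\le k/2$. With that, $S=\omega(\sqrt k)$ makes both terms vanish. This shows that the product-state penalty, visibility growing like $\sqrt Q$ rather than like $Q$, is what allows the weaker hypothesis on $S$ in the CC case.
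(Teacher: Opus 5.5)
Your proposal is correct and follows the paper's proof: Lemma~\ref{lem:parallel-decoder-error-correction} with the finite-field matrices $P(a_j)$ gives achievability, and the second (Fano) branch of Proposition~\ref{prop:qc-stochastic} at $p=0$, shown to satisfy $nC/k\to 0$, gives both converses. Your bookkeeping is slightly more careful than the paper's. The paper asserts that $nC_{CC}$ itself vanishes, which needs bounded $t$, whereas you divide by $k$ and use $t+1\le k$. One small correction: $c=1/(2\ln 2)$ holds only asymptotically as $y\to 0$, while the series $1-h_2\bigl(\tfrac{1-y}{2}\bigr)=\tfrac{1}{2\ln 2}\sum_{j\ge 1}\tfrac{y^{2j}}{j(2j-1)}$ gives the global bound with $c=1$.
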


\begin{proof}
  This follows directly from the application of Lemma \ref{lem:parallel-decoder-error-correction} with the explicit $P_j$ created by the finite field construction in Section \ref{sec:trans-errors} (which requires $B = 2t+1 \leq k/2$), and from Proposition~\ref{prop:qc-stochastic} with \(p=0\). More specifically, if we plug \(Q = (2t+1)(k+S)\) and \(\theta = 2\pi/(S2^{k/2})\) the small \(\Phi\) approximation from Equation~\eqref{eq:small-angle-qc-approx}, we find that the success probability vanishes when \((t+1)(1 + k/S) = o(\sqrt{k})\), and thus when t = \(o(\sqrt{k})\) and \(S = \omega((t+1)\sqrt{k})\). For the CC receiver, we have that for small \(\theta\), \(nC_{CC} \approx \frac{\pi^2}{2 \ln 2}\frac{(2t+1)(k+S)}{S^2}\), which vanishes as \(k \to \infty\) when \(S = \omega(\sqrt{k})\).
\end{proof}

Next, we show that under a fixed fraction of adversarial errors, our QQ receiver can still transmit information reliably in the limit of large code size.

\begin{theorem}
  \label{thm:adversarial-qq}
  For every fixed \(\tau < 1/16\), a QQ receiver using the binary simplex code with even \(k\), \(B\cdot k\) memory qubits, \(B \cdot S\) sensor qubits, and signal strength \(\theta = \frac{2\pi}{S 2^{k/2}}\) can correct any set of at most \(\tau n\) transmission errors with success tending to one as \(k \to \infty\), so long as \(B = B(k) = \omega(1)\).
\end{theorem}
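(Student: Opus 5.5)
The plan is to show that each of the $B$ parallel decoders, taken on its own, outputs the correct (permuted) message with probability bounded away from $1/2$ uniformly in $k$. A majority vote over the decoders then finishes the argument. Redundancy via the span condition of Lemma~\ref{lem:parallel-decoder-error-correction} is not needed here. Any invertible symmetric zero-diagonal $P_j$ from Lemma~\ref{lem:general-quadratic-decoder} will do, including $P_j=P$ for all $j$.

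\textbf{Step 1: write the corrupted evolution of one decoder.} Fix the message $u$ and an error set $E$ with $|E|\le\tau n$. Applying Eq.~\eqref{eq:single-error-hamiltonian} once per flipped position, with all factors commuting because they are diagonal, gives the final memory state of decoder $j$ as
\begin{equation}
\ket{\psi_j}=e^{-i\alpha H_E}\ket{P_ju}_X, \qquad H_E=\sum_{x'\in E}s_{x'}Z[x'], \qquad \alpha=\frac{S\theta}{2}=\frac{\pi}{2^{k/2}},
\end{equation}
up to a global phase, where $s_{x'}=\pm1$. Strictly, one should check that the signs $s_{x'}$ computed with the clean codeword still compose correctly when several errors occur. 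Since each exponent in Eq.~\eqref{eq:single-error-phase-difference} is linear in $r-c_u$, the contributions of the individual errors simply add, so this holds.

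\textbf{Step 2: bound the probability that decoder $j$ is correct.} With $\ket{\phi}=\ket{P_ju}_X$, each $Z[x']$ maps $\ket{\phi}$ to $\ket{P_ju\oplus x'}_X$. For distinct nonzero $x'$ these states are orthonormal and orthogonal to $\ket{\phi}$. Therefore
\begin{equation}
\langle H_E\rangle_\phi=0, \qquad \langle H_E^2\rangle_\phi=|E|,
\end{equation}
so the standard deviation satisfies $\Delta H_E=\sqrt{|E|}\le\sqrt{\tau n}<\sqrt{\tau}\,2^{k/2}$. The Mandelstam--Tamm bound $|\langle\phi|e^{-i\alpha H}|\phi\rangle|\ge\cos(\alpha\Delta H)$, valid when $\alpha\Delta H\le\pi/2$, then gives
\begin{equation}
\Pr[y_j=P_ju]\ge\cos^2\!\left(\pi\sqrt{\tau}\right)=:q .
\end{equation}
This bound is independent of $k$ and of the adversary's choice of $E$. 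Because $\tau<1/16$ gives $\pi\sqrt\tau<\pi/4$, we get $q>1/2$. This is exactly where the threshold $1/16$ enters. Verifying this variance computation, and making sure the quantum speed limit is applied in its valid range, is the step I expect to need the most care. It is the only place where the adversarial error count is used.

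\textbf{Step 3: amplify by majority vote.} The $B$ decoders act on disjoint memory and sensor registers and receive the same classical string $r$. Their final states are therefore a product state, and the outcomes $y_1,\dots,y_B$ are independent given $r$. Each candidate $\hat u_j=P_j^{-1}y_j$ equals $u$ independently with probability at least $q>1/2$. We output the value that occurs in more than $B/2$ of the candidates, if one exists. By Hoeffding's inequality, this value is $u$ with probability at least $1-\exp(-2B(q-1/2)^2)$, which tends to one whenever $B=\omega(1)$. The qubit count $B(k+S)$ and the signal strength $\theta$ match the statement, which completes the plan.
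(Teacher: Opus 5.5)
Your proposal is correct and follows the paper's proof essentially step for step: you write each corrupted register as \(e^{-iS\theta H_E/2}\) applied to the clean output, use \(\langle H_E\rangle=0\) and \(\langle H_E^2\rangle=|E|\le\tau n\) with the Fleming (equivalently Mandelstam--Tamm) bound to get a per-register success probability of at least \(\cos^2(\pi\sqrt{\tau})>1/2\), and then amplify by a majority vote over the independent registers via a Chernoff--Hoeffding bound. Your explicit check that \(\pi\sqrt{\tau}<\pi/4\) keeps the bound in its valid range, and your remark that distinct \(P_j\) (the span condition) are not needed, simply spell out points the paper leaves implicit.
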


\begin{proof}
  Let \(u\) be the transmitted message, and suppose transmission errors occur during time slots \(E \subseteq [n]\), where \(|E| \le \lfloor \tau n \rfloor\). Our decoding process will be simple. At the end of the protocol, we will measure our \(B\) registers in the \(X\) basis, apply \(P_{i}^{-1}\) to each measured bit string, and take a majority vote of the recovered values. If no result obtains a majority, we declare a decoding failure.

  If \(\ket{\psi(c_u)}_i\) is the state of a memory register \(i\) at the end of the protocol prior to the \(X\) basis measurement in the absence of errors, then after the errors in set \(E\), the state is
  \begin{equation}
    e^{-i S\theta H_{E,i}/2} \ket{\psi(c_u)}_i
  \end{equation}
  where
  \begin{equation}
     H_{E,i} = \sum_{x \in E} (-1)^{b_i(x)} (-1)^{c_u(x)} Z[x]
  \end{equation}
  and \(S \theta /2 = \pi \cdot 2^{-k/2}\). For any \(x \ne 0\),
  \begin{equation}
  _i\langle \psi(c_u) | Z[x] |\psi(c_u)\rangle_i = 0.
\end{equation}
so
\begin{equation}
  _i\langle \psi(c_u) | H_{E,i} | \psi(c_u)\rangle_i = 0
\end{equation}
and
\begin{equation}
_i\langle \psi(c_u) | H_{E,i}^2 | \psi(c_u)\rangle_i = |E| \le  \tau n
\end{equation}
When we measure this memory register, our probability of measuring the correct value is given by
  \begin{align}
    p_{\mathrm{succ}}^i &= |_i\langle \psi(c_u) | e^{-i S\theta H_{E,i}/2} | \psi(c_u)\rangle |^2 \\
    &\ge \cos^2 \left ( \frac{S\theta}{2} \sqrt{\tau n} \right )
  \end{align}
  where the inequality follows from Fleming's bound~\cite{fleming1973unitaritybound} so long as \(\frac{S\theta}{2} \sqrt{\tau n} \le \pi/2\). For our majority vote decoder to work, we need to obtain the correct result in a given register with probability \(p_{\mathrm{succ}}^i \ge 1/2 + \delta_{\tau}\) with \(\delta_{\tau} > 0\); this is true when \(\tau < 1/16\).
  Finally, given a fixed \(u\) and \(E\), our memory registers are in a product state at the end of the protocol, and the measurement results are independent. A standard Chernoff bound then yields that the probability of at most \(B/2\) of the registers yielding the right result is
  \begin{equation}
    \Pr[\hat{u} \ne u |u, E] \le e^{-2 \delta_{\tau}^2 B}.
  \end{equation}
  Thus, for \(B = \omega(1)\), \(p_{\mathrm{fail}} \to 0\) in the limit of large code size.
\end{proof}

With one more step, we may lift this to a threshold result when the transmitted codeword is subject to the binary symmetric channel.

\begin{theorem}
  \label{thm:threshold-qq}
  Consider a QQ receiver using the binary simplex code with even \(k\), \(B\cdot k\) memory qubits, \(B \cdot S\) sensor qubits, and signal strength \(\theta = \frac{2\pi}{S 2^{k/2}}\). If the bits of the transmitted codeword are flipped independently with probability \(p\), for fixed \(p < 1/16\), then
  \begin{equation}
    \lim_{k \to \infty} p_{\mathrm{fail}} = 0
  \end{equation}
  so long as \(B = B(k) = \omega(1)\), where \(p_{\mathrm{fail}}\) is the receiver failure probability.
\end{theorem}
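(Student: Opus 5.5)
The plan is to reduce Theorem~\ref{thm:threshold-qq} to the adversarial bound of Theorem~\ref{thm:adversarial-qq} by conditioning on the realized error set. Write $E$ for the random set of flipped positions. Under the binary symmetric channel with parameter $p$, $|E|$ is distributed as $\mathrm{Bin}(n,p)$ and is independent of the uniformly chosen message $u$. Since $p<1/16$ is fixed, fix any $\tau$ with $p<\tau<1/16$, say $\tau=(p+1/16)/2$. We then split
\begin{equation*}
p_{\mathrm{fail}} \le \Pr\bigl[|E|>\tau n\bigr] + \max_{u,\;|E'|\le \tau n}\Pr\bigl[\hat u\ne u \,\big|\, u, E=E'\bigr].
\end{equation*}
It suffices to show that both terms vanish as $k\to\infty$, with $n=2^k-1$.

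For the first term, a Hoeffding (or Chernoff) bound gives $\Pr[|E|>\tau n]\le e^{-2(\tau-p)^2 n}$, which tends to zero since $n\to\infty$. For the second term, the key observation is that the proof of Theorem~\ref{thm:adversarial-qq} gives a bound that is uniform over $u$ and over all $E$ with $|E|\le\tau n$. Conditioned on $u$ and $E$, the received string is deterministic. The per-register success probability is at least $\cos^2\bigl(\pi 2^{-k/2}\sqrt{\tau n}\bigr)\ge \cos^2(\pi\sqrt{\tau})$, using $n<2^k$. Because $\tau<1/16$, we have $\pi\sqrt\tau<\pi/4$, so this probability is at least $1/2+\delta_\tau$ with $\delta_\tau>0$ independent of $k$; the Fleming-bound hypothesis $\pi\sqrt\tau\le\pi/2$ also holds. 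The Chernoff bound over the $B$ independent register measurements then gives $\Pr[\hat u\ne u\mid u,E]\le e^{-2\delta_\tau^2 B}$, which tends to zero because $B=\omega(1)$.

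One point needs care: the $B$ measurement outcomes are independent only conditionally on $(u,E)$. This is exactly why we condition on the error pattern before applying the Chernoff bound, rather than arguing directly with the stochastic channel. Once we condition, the registers are in a product state, so the majority-vote analysis from Theorem~\ref{thm:adversarial-qq} carries over verbatim.

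The main obstacle is not technical. It is ensuring that the slack $\tau-p>0$ is chosen so that $\delta_\tau$ is a fixed positive constant while concentration still applies. Taking $\tau$ strictly between $p$ and $1/16$ handles both requirements. Averaging over $u$ and $E$ then gives
\begin{equation*}
p_{\mathrm{fail}}\le e^{-2(\tau-p)^2 n}+e^{-2\delta_\tau^2 B}\to 0,
\end{equation*}
which proves the theorem.
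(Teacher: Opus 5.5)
Your proposal is correct and follows the paper's proof: fix $\tau\in(p,1/16)$, bound $\Pr[|E|>\tau n]$ by Hoeffding, and apply the adversarial bound of Theorem~\ref{thm:adversarial-qq} conditionally on $(u,E)$, giving $p_{\mathrm{fail}}\le e^{-2(\tau-p)^2 n}+e^{-2\delta_\tau^2 B}$. You also spell out two details the paper leaves implicit: that the per-register success probability is at least $\cos^2(\pi\sqrt{\tau})>1/2$ uniformly in $k$ (using $n<2^k$), and that the registers are independent only conditionally on the error pattern.
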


\begin{proof}
  We use the same decoder as in the proof of Theorem~\ref{thm:adversarial-qq}. Fix \(p < \frac{1}{16}\), and pick \(\tau\) such that \(p < \tau < \frac{1}{16}\). Let \(W\) be a random variable denoting the weight of the transmission error. By Hoeffding's inequality, we get that
  \begin{equation}
    \Pr[W \ge \tau n] \le e^{-2 n (\tau-p)^2}.
  \end{equation}
  Combining this with Theorem~\ref{thm:adversarial-qq} gives that
  \begin{equation}
    p_{\mathrm{fail}} \le e^{-2 n (\tau-p)^2} + e^{-2 \delta_{\tau}^2 B}.
  \end{equation}
  which goes to zero so long as \(B = \omega(1)\) as desired.
\end{proof}

\begin{remark}
\label{rem:remark}
  By taking \(S=k\) and \(B = \lceil \log k \rceil\), we satisfy the requirements of Theorem~\ref{thm:threshold-qq} while using \(O(\log n \log \log n)\) total qubits. By Proposition~\ref{prop:qc-stochastic}, any QC receiver with the same number of qubits and same \(\theta\) will have success probability approaching zero.
\end{remark}

\subsection{Finite-Size Performance}
\label{sec:simulations}
\begin{figure}[ht]
  \centering
  \includegraphics[width=\columnwidth]{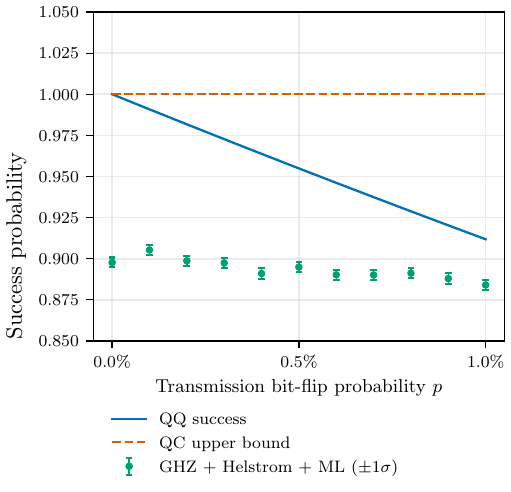}
  \caption{Finite-size decoding performance for the \([63,6,32]\) simplex code at total qubit budget \(Q=40\), with \(B=1\), \(S=Q-k\), and \(\theta=2\pi/(S2^{k/2})\). The solid blue curve is the exact QQ success probability, and the dashed orange curve is the QC upper bound obtained by numerically inverting Fano's inequality. Green points show the GHZ--Helstrom receiver followed by classical maximum-likelihood decoding, with one-standard-error bars from \(10^4\) trials per point. The QC bound is one throughout the plotted range.}
  \label{fig:q40-results}
\end{figure}
We now compare decoding success probabilities at finite code sizes and qubit budgets. For all comparisons, we use a single QQ register, \(B=1\), with \(k\) memory qubits and \(S=Q-k\) sensor qubits, and set
\begin{equation}
  \theta=\frac{2\pi}{(Q-k)2^{k/2}}, \qquad \Phi=Q\theta.
\end{equation}
The QC receivers use the same total number of qubits \(Q\) and signal strength \(\theta\), allocating all \(Q\) qubits to sensing. Figures~\ref{fig:q40-results} and~\ref{fig:q80-results} compare QQ performance with a QC upper bound and a concrete QC strategy for the \([63,6,32]\) simplex code at \(Q=40\) and the \([4095,12,2048]\) simplex code at \(Q=80\), respectively. In both cases, transmission bits flip independently with probabilities \(p\in[0,0.01]\). Additional code sizes and qubit budgets are considered in Appendix~\ref{sec:more-sims}.

For the QQ receiver, we evaluate the exact average success probability using Proposition~\ref{prop:qq-analytic-success-prob}. For the QC upper bound, we numerically invert the full Fano inequality used in the proof of Proposition~\ref{prop:qc-stochastic}. Specifically, with the per-slot information bound \(C=C_{QC}(\Phi,p)\), we numerically solve
\begin{equation}
  k-h_2(s)-(1-s)\log(2^k-1)=nC
\end{equation}
on \(s\in[2^{-k},1]\). The left-hand side increases from zero to \(k\) on this interval, giving an upper bound on success whenever \(0\le nC<k\); when \(nC\ge k\), we use the trivial bound \(s\le1\). For more details on the derivation of this equation, see Appendix~\ref{sec:proof-qc-stochastic}.

We also simulate a concrete non-adaptive QC receiver. At each time slot it prepares \((\ket{0^Q}+\ket{1^Q})/\sqrt{2}\) and performs the equal-prior Helstrom measurement to estimate the received bit. The two possible GHZ output states have overlap \(\lvert\cos(\Phi/2)\rvert\), giving a symmetric measurement error probability \(\epsilon=[1-\sin(\Phi/2)]/2\) in the regime \(\Phi\in[0,\pi]\) considered here. Composing transmission and measurement errors yields an effective binary symmetric channel with crossover probability
\begin{equation}
  p(1-\epsilon)+(1-p)\epsilon
  =\frac{1-(1-2p)\sin(\Phi/2)}{2}.
\end{equation}
In each trial, we sample transmission and measurement errors independently, then apply classical maximum-likelihood decoding to the measured bits \(y(x)\). For the simplex code, this decoder chooses the message maximizing
\begin{equation}
  \label{eq:simplex-ml-decoder}
  \sum_{x\ne0}(-1)^{y(x)\oplus u\cdot x}
  =n-2|y \oplus c_u|
\end{equation}
which is equivalent to finding the codeword with the minimum Hamming distance to the received bit string. Ties are broken uniformly at random. The linearity of the code, symmetry of the channel, and tie breaking strategy result in a success probability that is independent of the transmitted codeword, and thus for each trial we use the all zeros codeword, and compute Equation~\eqref{eq:simplex-ml-decoder} with a fast Walsh-Hadamard transform, setting the untransmitted \(x=0\) entry to zero.
Each plotted estimate uses \(10^4\) independent trials.

Figure~\ref{fig:q40-results} shows that, for \(Q=40\) and \(k=6\), the QQ receiver outperforms the simulated GHZ--Helstrom receiver with classical maximum-likelihood decoding at every sampled noise rate. The QC upper bound remains at one throughout the plotted range, so this comparison does not establish an advantage over every QC receiver. Improving the QC upper bound to obtain a provable advantage at these parameters remains an interesting avenue for future work.

Figure~\ref{fig:q80-results} establishes a separation at a finite qubit budget and codelength: for \(Q=80\) and \(k=12\), the QQ success probability exceeds the QC upper bound at every sampled noise rate. For example, at \(p=0.001\), the QQ receiver succeeds with probability approximately \(99.0\%\), while every QC receiver has success probability at most \(86.5\%\). The simulated GHZ strategy achieves \((53.0\pm0.5)\%\).

\begin{figure}[ht]
  \centering
  \includegraphics[width=\columnwidth]{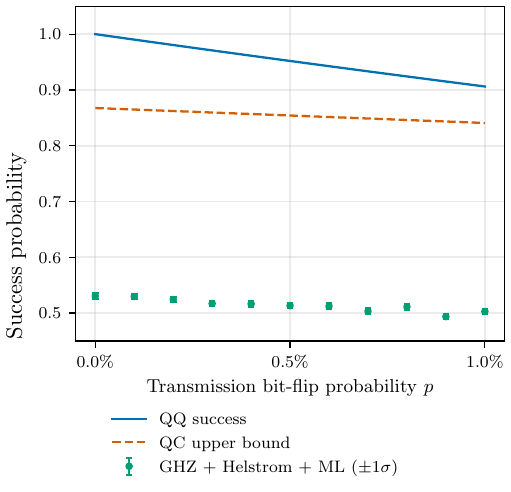}
  \caption{Finite-size decoding performance for the \([4095,12,2048]\) simplex code at \(Q=80\), with the same curve and marker conventions as Fig.~\ref{fig:q40-results}. The exact QQ success probability lies above the QC upper bound at every sampled transmission error probability.}
  \label{fig:q80-results}
\end{figure}

\section{Conclusion}
\label{sec:conclusion}
In this work, we have shown that a persistent quantum memory and quantum processing can improve the decoding of binary classical codes communicated via sequential weak \(Z\) rotations on qubits. For binary simplex codes, we construct receivers with a quantum memory that achieve reliable decoding in regimes where every receiver retaining only classical information between arrivals has vanishing success probability with the same qubit count and \(Z\) rotation angle. We additionally find that the advantage holds for finite simplex code sizes and receiver qubit counts.

Our constructions and bounds prioritize simplicity, leaving room for the improvement of constant factors and possibly finite size performance. A more interesting question is whether a similar advantage can be obtained for codes with higher rates. Simplex codes use \(2^k-1\) bits to encode only \(k\) message bits; extending our results to codes with higher rates would make quantum codeword sensing a more practical candidate for actual communication scenarios.

Another direction to pursue is robustness to noise in the receiver's qubits. Intriguingly, in the QQ protocol described in the work, the memory and sensor begin and end each time slot in a product state, in which the sensor qubits are all \(\ket{0}\). Thus, the sensor only needs to maintain coherence for the duration of a single bit arrival, raising the possibility of protecting the memory qubits in a quantum error correcting code while leaving the sensor qubits bare, although error propagation from the sensor to the memory would still need to be controlled. Finally, it would be interesting to investigate if the ideas of quantum codeword sensing could prove useful for temporal signal classification tasks beyond classical codeword recovery.

\section{Acknowledgments}
J.B. acknowledges support from the MIT Center for Quantum Engineering--Laboratory for Physical Sciences Doc Bedard Fellowship. Z.E.C acknowledges support from the NSF Graduate Research Fellowship Program and the Draper Scholarship. I.L.C. acknowledges support in part from the NSF Center for Ultracold Atoms.

\section{AI Use Statement}
The initial idea of using a temporal version of quantum trajectory sensing for classical error correction arose during a conversation between the authors in March of 2026. During the following summer, the authors tested ideas and produced initial constructions and bounds on classical memory receivers using various models of Claude Opus, Claude Fable, and GPT 5.6 Sol.  These constructions and bounds were then refined and simplified through further conversations with LLMs, including Claude Opus 5.5 and GPT 6 Astra. All proofs and constructions were verified, digested, and re-written by the authors, and the authors take full responsibility for the contents of this work.

\bibliographystyle{apsrev4-2}
\bibliography{bibliography}
\appendix
\onecolumngrid

\section{Deferred Proofs}
\label{sec:deferred-proofs}
\subsection{Proof of Lemma~\ref{lem:general-quadratic-decoder}}
\label{app:general-quadratic-decoder}

\begin{proof}
We follow the same phase-matching argument as in Section~\ref{sec:describe-protocol}. All vectors and matrix operations below are over \(\mathbb{F}_2\): addition is modulo two, so two identical terms cancel. Sums of the real signs \((-1)^{b(x)}\) use ordinary addition. First, we show that \(P^{-1}\), like \(P\), is symmetric with zero diagonal. Symmetry follows from
\[
(P^{-1})^{\mathsf T}=(P^{\mathsf T})^{-1}=P^{-1}.
\]
For any bitstring \(y\), the off-diagonal terms in \(y^{\mathsf T}Py\) occur in identical pairs and cancel modulo two, while the diagonal terms vanish because \(P_{ii}=0\). Thus, \(y^{\mathsf T}Py=0\) for every \(y\). Applying this observation to the bitstring \(P^{-1}y\), and using the symmetry of \(P^{-1}\), gives
\[
y^{\mathsf T}P^{-1}y
=(P^{-1}y)^{\mathsf T}P(P^{-1}y)=0.
\]
Choosing \(y\) to have a single 1 in position \(i\) shows that \((P^{-1})_{ii}=0\), as required.

The function \(b\) in Eq.~\eqref{eq:general-quadratic-offset} is quadratic because each term is a product of two input bits. To understand how it changes when we replace \(x\) by \(x\oplus y\), we expand these products. This separates terms involving only \(x\), terms involving only \(y\), and cross terms involving one bit from each string:
\begin{equation}
\label{eq:general-offset-polar}
\begin{aligned}
b(x\oplus y)
&=\bigoplus_{i<j}(P^{-1})_{ij}(x_i\oplus y_i)(x_j\oplus y_j)\\
&=\bigoplus_{i<j}(P^{-1})_{ij}
  (x_i x_j\oplus x_i y_j\oplus y_i x_j\oplus y_i y_j)\\
&=b(x)\oplus b(y)\oplus
  \bigoplus_{i<j}(P^{-1})_{ij}(x_i y_j\oplus y_i x_j)\\
&=b(x)\oplus b(y)\oplus x^{\mathsf T}P^{-1}y.
\end{aligned}
\end{equation}
The last equality uses the symmetry and zero diagonal of \(P^{-1}\): the two cross terms for each pair \(i<j\) are precisely the \(ij\) and \(ji\) terms of \(x^{\mathsf T}P^{-1}y\).

We next obtain the Fourier identity which will replace Eq.~\eqref{eq:quadratic-fourier-identity}. At a label \(z\), the normalized Walsh--Hadamard transform adds the signs \((-1)^{b(x)}\), reversing each sign for which the parity \(x\cdot z\) equals 1, and multiplies the sum by \(2^{-k/2}\). Our goal is to show that the result is always \(+1\) or \(-1\), and to determine how its sign depends on \(z\). All sums over bitstrings below are taken over \(\mathbb{F}_2^k\).

First take \(z=0\), so that every parity \(x\cdot z\) is zero. We need the magnitude of the real sum \(\sum_x(-1)^{b(x)}\). Squaring it introduces two inputs of \(b\); writing these inputs as \(x\) and \(x\oplus y\) will let us apply Eq.~\eqref{eq:general-offset-polar}. Explicitly,
\begin{equation}
\begin{aligned}
\left(\sum_x(-1)^{b(x)}\right)^2
&=\sum_x\sum_y(-1)^{b(x)\oplus b(y)}\\
&=\sum_y\sum_x(-1)^{b(x)\oplus b(x\oplus y)}\\
&=\sum_y\sum_x(-1)^{b(y)\oplus x^{\mathsf T}P^{-1}y}\\
&=\sum_y(-1)^{b(y)}\sum_x(-1)^{x^{\mathsf T}P^{-1}y}.
\end{aligned}
\end{equation}
In the second equality, we replace the second summation variable by \(x\oplus y\) and interchange the two finite sums. For each fixed \(x\), this change of variable is a bijection on the possible bitstrings, so the double sum is unchanged. The third equality uses Eq.~\eqref{eq:general-offset-polar}, with the two copies of \(b(x)\) canceling.

To evaluate the inner sum, first take \(y=0\). Then every exponent is zero, and the sum equals \(2^k\). If \(y\ne0\), invertibility implies \(P^{-1}y\ne0\), so there is some coordinate \(i\) with \((P^{-1}y)_i=1\). Flipping \(x_i\) changes \(x^{\mathsf T}P^{-1}y\) by 1, pairing each summand with one of the opposite sign. These pairs cancel, giving
\[
\sum_x(-1)^{x^{\mathsf T}P^{-1}y}
=\begin{cases}
2^k,&y=0,\\
0,&y\ne0.
\end{cases}
\]
Only \(y=0\) therefore contributes to the outer sum, and its prefactor is \((-1)^{b(0)}=1\). Consequently,
\[
\left(\sum_x(-1)^{b(x)}\right)^2=2^k,
\qquad
2^{-k/2}\sum_x(-1)^{b(x)}=\pm1.
\]

We can now evaluate the Fourier transform at an arbitrary memory label \(z\). We want to absorb the parity term \(x\cdot z\) into a shifted input of \(b\). Equation~\eqref{eq:general-offset-polar} produces the cross term \(x^{\mathsf T}P^{-1}y\), so choosing \(y=Pz\) makes this term exactly \(x\cdot z\). With this choice, the identity becomes
\[
b(x\oplus Pz)=b(x)\oplus b(Pz)\oplus x\cdot z.
\]
Rearranging this identity gives \(b(x)\oplus x\cdot z=b(x\oplus Pz)\oplus b(Pz)\). Thus,
\begin{equation}
\begin{aligned}
2^{-k/2}\sum_x(-1)^{b(x)\oplus x\cdot z}
&=2^{-k/2}\sum_x(-1)^{b(x\oplus Pz)\oplus b(Pz)}\\
&=(-1)^{b(Pz)}2^{-k/2}\sum_x(-1)^{b(x\oplus Pz)}\\
&=(-1)^{b(Pz)}2^{-k/2}\sum_x(-1)^{b(x)}.
\end{aligned}
\end{equation}
In the second equality, \(b(Pz)\) is independent of \(x\), so its sign factors out of the sum. In the last equality, shifting \(x\) by \(Pz\) simply permutes the bitstrings being summed over. The remaining normalized sum is the same \(\pm1\) found above, independent of \(z\). Thus, the normalized Fourier transform is either \(+1\) or \(-1\) at every \(z\). We define \(q_P(z)\) to be 0 in the first case and 1 in the second. At \(z=0\), this means \((-1)^{q_P(0)}=2^{-k/2}\sum_x(-1)^{b(x)}\). Our calculation shows that the sign at \(z\) is the sign at zero multiplied by \((-1)^{b(Pz)}\), so
\[
q_P(z)=b(Pz)\oplus q_P(0).
\]
Thus, \(q_P(z)\) differs from \(b(Pz)\) only by a constant bit, independent of \(z\). In this notation, the Fourier identity becomes
\begin{equation}
\label{eq:general-quadratic-fourier}
(-1)^{q_P(z)}=2^{-k/2}\sum_x(-1)^{b(x)\oplus x\cdot z}.
\end{equation}

As in the worked example, the left side is the phase factor we want \(D_b\) to impart to \(\ket{z}\), while the right side contains the activation function \(b(x)\oplus x\cdot z\). Apply \((-1)^t=1-2t\), valid for any bit \(t\), to both sides:
\begin{equation}
\begin{aligned}
1-2q_P(z)
&=2^{-k/2}\sum_x\left[1-2\bigl(b(x)\oplus x\cdot z\bigr)\right]\\
&=2^{k/2}-2^{1-k/2}\sum_{x\ne0}\bigl(b(x)\oplus x\cdot z\bigr).
\end{aligned}
\end{equation}
Here the \(x=0\) term in the bit sum vanishes because \(b(0)=0\). Rearranging gives
\begin{equation}
q_P(z)=\frac{1-2^{k/2}}{2}
+2^{-k/2}\sum_{x\ne0}\bigl(b(x)\oplus x\cdot z\bigr).
\end{equation}
Thus, writing \((-1)^{q_P(z)}=e^{i\pi q_P(z)}\), the desired phase factor becomes
\begin{equation}
\begin{aligned}
(-1)^{q_P(z)}
={}&e^{i\pi(1-2^{k/2})/2}\exp\!\left[\frac{i\pi}{2^{k/2}}\sum_{x\ne0}\bigl(b(x)\oplus x\cdot z\bigr)\right].
\end{aligned}
\end{equation}
The first factor is independent of \(z\). At \(\theta=2\pi/(S2^{k/2})\), the second is precisely the phase factor defining \(D_b\) in Eq.~(\ref{eq:offset-diagonal-action}). Thus, up to a global phase, \(D_b\ket{z}=(-1)^{q_P(z)}\ket{z}\).

It remains to show that this choice gives the desired message pattern. By Eq.~(\ref{eq:general-offset-polar}) and the symmetry of \(P\),
\begin{equation}
\begin{aligned}
q_P(z)\oplus q_P(z\oplus u)
&=b(Pu)\oplus(Pz)^{\mathsf T}P^{-1}(Pu)\\
&=b(Pu)\oplus z\cdot Pu.
\end{aligned}
\end{equation}
Consequently, the formal decomposition gives
\begin{equation}
\begin{aligned}
&D_bX[u]D_b^\dagger\ket{+}^{\otimes k}=(-1)^{b(Pu)}2^{-k/2}\sum_z(-1)^{z\cdot Pu}\ket{z}\\
&\qquad=(-1)^{b(Pu)}\ket{Pu}_X.
\end{aligned}
\end{equation}
The prefactor is a global phase. Since \(P\) is invertible, the measured string \(Pu\) uniquely determines \(u\).
\end{proof}
\subsection{Proof of Proposition \ref{prop:qc-stochastic}}
\label{sec:proof-qc-stochastic}
\begin{proof}
    We may consider the transmission of a single bit of the codeword as a single use of a classical-quantum channel. Let \(C_j\) be the random variable corresponding to the bit of the codeword and \(Y_j\) be the random variable corresponding to the measurement outcome in the \(j\)th time slot. Suppose \(\Pr[C_j = 0] = q\) and \(\Pr[C_j = 1] = (1-q)\). Furthermore, suppose the receiver initially prepares a \(Q\) qubit state \(\ket{\psi_0}\) (without loss of generality we assume they prepare a pure state, as using a mixed state is equivalent to a convex combination of pure state strategies, and thus has success probability upper bounded by the best pure state in the mixture). Let \(\ket{\psi_1} = R \ket{\psi_0}\). Accounting for the effects of errors, at the end of the time slot, if \(C_j = 0\), the receiver has the state
  \begin{equation}
    \sigma_0 = (1-p) \ket{\psi_0}\bra{\psi_0} + p \ket{\psi_1}\bra{\psi_1}
  \end{equation}
  and if \(C_j = 1\), the receiver has the state
  \begin{equation}
    \sigma_1 = p \ket{\psi_0}\bra{\psi_0} + (1-p) \ket{\psi_1}\bra{\psi_1}.
  \end{equation}
  We are interested in the maximum mutual information \(I(C_j; Y_j)\) over all possible \(q\) and POVMs, equivalently, \(\max_q I_{\mathrm{acc}}(\mathcal{E})\) for the ensemble \(\mathcal{E} = \{(q, \sigma_0), (1-q, \sigma_1)\}\). A standard upper bound would be the Holevo information \(\chi(\mathcal{E})\), but it turns out in this scenario we can do better. To this end, note that \(\sigma_0\) and \(\sigma_1\) are both confined to a two-dimensional subspace, and so
  \begin{align}
    F(\sigma_0, \sigma_1)^2 &= \mathrm{Tr}(\sigma_0 \sigma_1) + 2 \sqrt{\mathrm{det}(\sigma_0) \mathrm{det}(\sigma_1)} \\
    &= 1 - (1 - 2p)^2 (1 - \gamma^2)
  \end{align}
  where \(\gamma = |\langle \psi_0 | \psi_1 \rangle| \ge \cos (\Phi / 2)\) by Lemma~\ref{lem:overlap}. By Uhlmann's theorem there exist purifications \(\ket{\phi^0}\) and \(\ket{\phi^1}\) such that \(F(\sigma_0, \sigma_1) = \langle \phi^0 | \phi^1 \rangle\). Note that \(I_{\mathrm{acc}}(\mathcal{E}) \le I_{\mathrm{acc}}(\mathcal{E}')\) where \(\mathcal{E}' = \{(q, \ket{\phi^0}), (1-q, \ket{\phi^1})\}\), as the receiver can always replicate a measurement of a mixed state, using the purified state, by throwing away the purifying system. By \cite{ban1997accessibleinformation,sasaki1998quantumchannels}, the accessible information of an ensemble of two pure states with overlap \(\kappa\), maximized over input priors, equals \(1-h_2\big((1-\sqrt{1-\kappa^2})/2\big)\).  Applying this to \(\mathcal{E}'\), whose two states have overlap \(\kappa = F(\sigma_0,\sigma_1)\), and using \(\sqrt{1-F(\sigma_0,\sigma_1)^2} = (1-2p)\sqrt{1-\gamma^2}\) from above,
  \begin{align}
    I_{\mathrm{acc}}(\mathcal{E}) &\le 1 - h_2\left( \frac{1 - (1-2p)\sqrt{1-\gamma^2}}{2} \right) \\
                                  &\le 1 - h_2\left( \frac{1-(1-2p)\sin(\Phi/2)}{2} \right) \\
                                  &= C_{QC}(\Phi, p).
  \end{align}

  The scenario of a QC receiver, in which the receiver is required to prepare a new state at the start of each time slot, and measure at the end of each time slot, is equivalent to a classical-quantum channel in which only (potentially adaptive) product measurements are made. By the chain rule of mutual information,
  \begin{align}
    I(C^n;Y^n) &=\sum_{j=1}^n I(C^n;Y_j\mid Y_{<j})\\
&=\sum_{j=1}^n I(C_j;Y_j\mid Y_{<j})\\
&\le nC_{QC}(\Phi,p).
  \end{align}
  where we used that conditioned on the previous set of measurement results, \(I(C_j; Y_j | Y_{<j} = h) \le I_{\mathrm{acc}}(\mathcal{E}_{j,h}) \le C_{QC}(\Phi,p)\) (where \(\mathcal{E}_{j,h}\) is the ensemble obtained on the \(j\)th slot, given previous measurement results \(h\)).

  Let \(s = \Pr[u = \hat{u}]\). Using Fano's inequality, we obtain
  \begin{equation}
     n C_{QC}(\Phi, p) \ge I(C^n ; Y^n) \ge H(C^n) - h_2(s) - (1-s) \log (2^k-1)
\label{eq:single-q-mut-info-lower}
  \end{equation}
  Noting that \(h_2(s) \le 1\) yields that
  \begin{equation}
    \label{eq:simplified-fanos}
    s \le \frac{1 + n C_{QC}(\Phi, p)}{k} 
  \end{equation}
  Alternatively, we may use Pinsker's inequality to obtain
  \begin{equation}
    n C_{QC}(\Phi, p) \ge \frac{2}{\ln 2} \left ( s - \frac{1}{2^{k}}\right)^2
  \end{equation}
  and solving for \(s\) gives
  \begin{equation}
    \label{eq:pinsker-bound}
    s \le \frac{1}{2^k} + \sqrt{\frac{\ln 2}{2} n C_{QC}(\Phi, p)}
\end{equation}

Finally, we may slightly modify our analysis above for the case of a CC receiver. Here, \(\ket{\psi_0} = \bigotimes_{i=1}^Q \ket{\phi_i}\), and thus \( \gamma = |\langle \psi_0 | R | \psi_0\rangle| \ge \cos^Q(\theta/2)\). We may make the same argument about accessible information, for this ensemble, instead obtaining
  \begin{align}
    I_{\mathrm{acc}}(\mathcal{E}) &\le 1 - h_2\left ( \frac{1 - (1-2p)\sqrt{1 - \gamma^2}}{2} \right ) \\
                        & \le 1 - h_2 \left ( \frac{1-(1-2p)\sqrt{1 - \cos^{2Q}(\theta/2)}}{2} \right )  \\
    &= C_{CC}(\theta, p, Q).
  \end{align}
The rest of the argument follows, noting that now a single measurement result \(Y_j\) refers to the result of a POVM of the entire register.
\end{proof}

\subsection{QQ Receiver Success Probability}
\label{sec:qq-success-prob}

\begin{proposition}
  \label{prop:qq-analytic-success-prob}
  Consider the QQ receiver described in Section~\ref{sec:describe-protocol} for the simplex code with \(B = 1\), \(k\) even, and \(\alpha = S\theta/2 = \pi/2^{k/2}\). Suppose the transmitted codeword is subject to the binary symmetric channel with bit flip probability \(p\). Then, the probability that the decoder returns the correct message is
  \begin{equation}
    \label{eq:qq-b1-success}
    p_{\mathrm{succ}}= 2^{-k} + (1-2^{-k})|w|^{2^{k-1}} \left (\frac{1}{2} + \frac{1}{2} \cos \left (2^{k/2} \arg w  \right ) \right )
\qquad
w=1-p+pe^{2i\alpha}.
\end{equation}
\end{proposition}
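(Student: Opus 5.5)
We will compute the success probability directly as the averaged squared overlap between the ideal output $\ket{Pu}_X$ and the actual final memory state. Since decoding amounts to measuring in the $X$ basis and applying $P^{-1}$, the decoder is correct exactly when the outcome is $Pu$. By Eq.~\eqref{eq:memory-overlap-action} with $c_u$ replaced by the received word $r=c_u\oplus e$, the basis state $\ket{z}$ acquires the phase $\exp(2i\alpha\sum_{x\ne0}f_z(x)r(x))$. Dividing out the clean phases, which by Lemma~\ref{lem:general-quadratic-decoder} produce $\ket{Pu}_X$, gives the success amplitude
\[
A(e)=2^{-k}\sum_{z}\prod_{x\ne0}e^{2i\alpha\, g_z(x)e(x)},\qquad g_z(x)=f_z(x)\bigl(1-2c_u(x)\bigr)\in\{0,\pm1\}.
\]
Then $p_{\mathrm{succ}}=\mathbb{E}_e|A(e)|^2$. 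We expand this as a double sum over $z,z'$. Because the $e(x)$ are independent, the expectation factorizes over positions:
\[
\mathbb{E}_e|A(e)|^2=2^{-2k}\sum_{z,z'}\prod_{x\ne0}\Bigl(1-p+p\,e^{2i\alpha(g_z(x)-g_{z'}(x))}\Bigr).
\]

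Next we evaluate each factor. Since $1-2c_u(x)$ is the same sign for $z$ and $z'$, the difference $g_z(x)-g_{z'}(x)=(f_z(x)-f_{z'}(x))(1-2c_u(x))$ lies in $\{0,\pm1\}$. Hence each factor is $1$, $w$ or $\bar w$. The nontrivial positions are those where $f_z\oplus f_{z'}=c_{z\oplus z'}$ equals one, and for $z\ne z'$ there are exactly $2^{k-1}$ of them. Writing $N_\pm$ for the number of positions giving $w$ and $\bar w$ respectively, the $(z,z')$ term becomes $|w|^{2^{k-1}}e^{i(N_+-N_-)\arg w}$.

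The key step, and the one I expect to need the most care, is computing
\[
N_+-N_-=\sum_x (f_z-f_{z'})(1-2c_u)=\bigl(|f_z|-|f_{z'}|\bigr)-2\Bigl(\sum_x f_zc_u-\sum_x f_{z'}c_u\Bigr).
\]
Using Eq.~\eqref{eq:mask-codeword-overlap} together with the shift identity Eq.~\eqref{eq:mask-codeword-shift}, this telescopes to $|f_{z\oplus u}|-|f_{z'\oplus u}|$. The Fourier identity from the proof of Lemma~\ref{lem:general-quadratic-decoder} gives $|f_y|=\tfrac12\bigl(2^k-2^{k/2}(-1)^{q_P(y)}\bigr)$. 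Therefore $N_+-N_-$ equals $0$ when $q_P(z\oplus u)=q_P(z'\oplus u)$, and $\pm 2^{k/2}$ otherwise.

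Finally we count pairs. Let $N_0,N_1$ be the numbers of $y$ with $q_P(y)=0,1$. Since the normalized Walsh transform of $(-1)^{q_P}$ is $\pm1$ at $0$ (bentness, i.e.\ Eq.~\eqref{eq:general-quadratic-fourier} applied to $q_P$ as a quadratic form of the same type), we have $\{N_0,N_1\}=\{2^{k-1}\pm2^{k/2-1}\}$. This gives $N_0^2+N_1^2=2^{2k-1}+2^{k-1}$ and $2N_0N_1=2^{2k-1}-2^{k-1}$. The diagonal pairs $z=z'$ contribute $2^k$. The off-diagonal equal-value pairs contribute $(N_0^2+N_1^2-2^k)|w|^{2^{k-1}}$. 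The unequal pairs come in conjugate couples and contribute $2N_0N_1|w|^{2^{k-1}}\cos(2^{k/2}\arg w)$. Multiplying by $2^{-2k}$ and simplifying yields Eq.~\eqref{eq:qq-b1-success}. The only delicate points are the telescoping in the key step and confirming that $q_P$ is itself balanced in the bent sense, which I would check via the polarization identity Eq.~\eqref{eq:general-offset-polar}.
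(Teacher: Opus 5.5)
Your proposal is correct. Its skeleton matches the paper's proof: the same double sum over pairs of memory labels, the same factorization over independent positions into factors $1$, $w$, $\bar w$ with exactly $2^{k-1}$ nontrivial positions, and a pair phase set by a difference of normalized Walsh values of the bent offset. Your identity $|f_y|=\tfrac12\bigl(2^k-2^{k/2}(-1)^{q_P(y)}\bigr)$ is the counterpart of the paper's $\tfrac12\arg w\,[\hat g(z)-\hat g(z\oplus d)]$.

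The routes differ in bookkeeping and in the final count.
\begin{itemize}
\item The paper removes $u$ at the outset by reindexing $z$. It then invokes Lemma~\ref{lem:technical}: for each fixed $d\neq0$, exactly half of the $z$ satisfy $\hat g(z)=\hat g(z\oplus d)$, which follows from $\sum_z\hat g(z)\hat g(z\oplus d)=0$.
\item You carry $u$ through the telescoping of mask weights. You then count all ordered pairs at once from the value distribution $N_0,N_1$ of $q_P$, removing $u$ by the bijection $z\mapsto z\oplus u$ only at the end.
\end{itemize}
Your route needs a weaker fact: the weight of the dual sign function, rather than the vanishing of all its nontrivial shifted autocorrelations. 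The paper's route gives the finer per-$d$ statement and makes message independence explicit before any counting. The totals agree: your $N_0^2+N_1^2-2^k=2^{k-1}(2^k-1)$ equal-value off-diagonal pairs is exactly the paper's $(2^k-1)\cdot 2^{k-1}$.

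On the point you flagged, you do not need to treat $q_P$ as a quadratic form of the same type as $b$, and in general it is not one. Since $q_P(z)=b(Pz)\oplus q_P(0)$, it can acquire linear terms and a constant. For example, with $k=4$ and $P=J+I$ one finds $b=\bigoplus_{i<j}x_ix_j$, $\sum_x(-1)^{b(x)}=-4$, and hence $q_P(0)=1$.

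The count you need follows at once by summing Eq.~\eqref{eq:general-quadratic-fourier} over $z$:
\[
\sum_z(-1)^{q_P(z)}=2^{-k/2}\sum_x(-1)^{b(x)}\sum_z(-1)^{x\cdot z}=2^{k/2}(-1)^{b(0)}=2^{k/2}.
\]
So $N_0=2^{k-1}+2^{k/2-1}$ and $N_1=2^{k-1}-2^{k/2-1}$ exactly, which is all your final step uses.
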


Before the proof, a short technical lemma.
\begin{lemma}
  \label{lem:technical}
  Let \(g: \mathbb{F}_2^k \to \{-1, 1\}\) be bent, that is, its unnormalized Walsh-Hadamard transform: \(\hat{g}(z) = \sum_{x \in \mathbb{F}_2^k} g(x)(-1)^{z\cdot x}\) has constant magnitude, and let \(d \in \mathbb{F}_2^k\) with \(d \ne 0\). Then, for half of the values \(z \in \mathbb{F}_2^k\), \(\hat{g}(z) = \hat{g}(z\oplus d)\).
\end{lemma}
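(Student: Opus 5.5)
The approach is to count the pairs \(\{z, z\oplus d\}\) on which the two transform values agree, using the fact that a bent function has Walsh--Hadamard values of fixed magnitude \(2^{k/2}\).

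\emph{Step 1: reduce to a sign count.} Since \(g\) is bent with \(g=\pm1\), Parseval gives \(\sum_z \hat g(z)^2 = 2^{2k}\). Constant magnitude then forces \(|\hat g(z)| = 2^{k/2}\) for every \(z\) (so \(k\) is even). Hence \(\hat g(z)\hat g(z\oplus d)\in\{\pm 2^k\}\), and \(\hat g(z)=\hat g(z\oplus d)\) holds exactly when this product equals \(+2^k\). The claim is therefore equivalent to
\[
\sum_{z}\hat g(z)\hat g(z\oplus d)=0 .
\]
Indeed, if \(A\) is the number of agreeing \(z\), the sum equals \(2^k\bigl(A-(2^k-A)\bigr)\), which vanishes iff \(A=2^{k-1}\).

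\emph{Step 2: evaluate the correlation.} Expand both transforms:
\[
\sum_z \hat g(z)\hat g(z\oplus d)=\sum_{x,y} g(x)g(y)(-1)^{d\cdot y}\sum_z(-1)^{z\cdot(x\oplus y)} .
\]
The inner sum is \(2^k\delta_{x,y}\), which leaves \(2^k\sum_x g(x)^2(-1)^{d\cdot x}=2^k\sum_x(-1)^{d\cdot x}\). This is zero because \(d\neq0\), by the same pairing argument used in the proof of Lemma~\ref{lem:general-quadratic-decoder}: flip a coordinate \(i\) with \(d_i=1\).

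\emph{Main obstacle.} The only point needing care is Step 1, namely establishing that the common magnitude is exactly \(2^{k/2}\) and nonzero, so that each product is \(\pm2^k\) and the sum counts agreements linearly. Parseval settles this immediately. Everything else is the routine character-orthogonality computation above.
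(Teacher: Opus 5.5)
Your proof is correct and follows the paper's argument. The paper likewise shows $\sum_z \hat g(z)\hat g(z\oplus d) = 2^k\sum_x(-1)^{d\cdot x} = 0$ by character orthogonality, then concludes that agreements and disagreements are equally many because each product is $\pm 2^k$. Your Parseval step makes explicit what the paper uses tacitly: the common magnitude is exactly $2^{k/2}$, and in particular nonzero.
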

\begin{proof}
  Consider that
  \begin{align}
    \sum_z \hat{g}(z) \hat{g}(z \oplus d) &= \sum_x \sum_y \sum_z g(x) g(y) (-1)^{z\cdot x} (-1)^{(z\oplus d)\cdot y} \\
             &= \sum_{x,y} g(x) g(y) (-1)^{d\cdot y} \sum_z (-1)^{z\cdot(x\oplus y)} \\
                                  &= 2^k \sum_x g(x)^2 (-1)^{d\cdot x} \\
    &= 0
\end{align}
Since \(g\) is bent, \(\hat{g}(z)\hat{g}(z\oplus d) = 2^k\) when \(\hat{g}(z)=\hat{g}(z\oplus d)\) and \(\hat{g}(z)\hat{g}(z\oplus d) = -2^k\) otherwise. Thus, the number of \(z\) values for which they agree is equal to the number for which they don't.
\end{proof}

We now give a proof of Proposition~\ref{prop:qq-analytic-success-prob}.
\begin{proof}
Suppose message \(u\) is sent, and error \(E\) occurs.
As in the proof of Theorem~\ref{thm:adversarial-qq}, let
\begin{equation}
 H_{E} = \sum_{x \in E} g(x) (-1)^{c_u(x)} Z[x]
\end{equation}
where we have defined \(g(x) = (-1)^{b(x)}\).
The probability of successful decoding is given by
\begin{align}
 | \langle \psi(c_u)| e^{i\alpha H_E} |\psi(c_u) \rangle |^2 &= \frac{1}{2^{2k}} \left | \sum_{z} \exp\left [i \alpha \sum_{x \in E} g(x) (-1)^{c_u(x)} (-1)^{x \cdot z}\right ] \right|^2 \\
        &= \frac{1}{2^{2k}} \left | \sum_{z} \exp\left [i \alpha \sum_{x \in E} g(x)(-1)^{x \cdot z}\right] \right|^2
\end{align}
where in the last line we used that \(c_u(x) = u\cdot x\) and reindexed the \(z\) variable. Note that this is independent of the message \(u\). Thus, the total success probability is given by
\begin{align}
       p_{\mathrm{succ}} &= 2^{-2k}  \sum_{z,d} \mathbb{E}_E \exp \left [ i\alpha \sum_{x \in E} g(x) (-1)^{z\cdot x} (1 - (-1)^{d\cdot x}) \right ] \\
                         &= 2^{-k} + 2^{-2k} \sum_{z} \sum_{d\ne 0} \sum_{E \subseteq [n]} p(E) \prod_{x\in E} \exp \left [i \alpha g(x) (-1)^{z\cdot x}(1-(-1)^{d\cdot x}) \right ] \\
       &= 2^{-k} + 2^{-2k} \sum_{z} \sum_{d \ne 0} \prod_{x \in \mathbb{F}_2^k} \left ( (1-p) + p \exp \left [ i \alpha g(x) (-1)^{z\cdot x}(1 - (-1)^{d\cdot x}) \right ] \right ).
\end{align}
Consider that
\begin{equation}
       (1-p) + p \exp \left [ i \alpha g(x) (-1)^{z\cdot x}(1 - (-1)^{d\cdot x}) \right ] = \begin{cases}
         1 &\text{ if } d\cdot x = 0 \\
         1 - p + pe^{\pm 2i\alpha} &\text{ otherwise}
\end{cases}
\end{equation}
Let \(w = 1 - p + pe^{2i\alpha}\) and for fixed \(z\) and \(d\), with \(d\neq 0\), let \(a_x = g(x)(-1)^{z\cdot x}[1-(-1)^{d\cdot x}]\). Then, 
\begin{equation}
  \prod_{x} [1-p + pe^{i\alpha a_x}] = |w|^{2^{k-1}} \exp \left [ i \frac{\arg w}{2} \sum_x a_x \right ],
\end{equation}
where we used that there are \(2^{k-1}\) values of \(x\) for which \(d\cdot x = 1\), and
\begin{align}
 \sum_x a_x &= \sum_x g(x)(-1)^{z\cdot x}[1-(-1)^{d\cdot x}] \\
  &= \hat{g}(z) - \hat{g}(z \oplus d).
\end{align}
By Lemma~\ref{lem:technical}, on half of \(z \in \mathbb{F}_2^k\), this sum is zero, and on the other half, it is \(\pm 2 \cdot 2^{k/2}\). Note that the substitution \(z \mapsto z \oplus d\) changes the sign, so the positive and negative cases each occur for half of the possible values of \(z\) giving a nonzero sum. Thus, the whole average becomes
\begin{align}
  p_{\mathrm{succ}} &= 2^{-k} + (1-2^{-k})|w|^{2^{k-1}} \left (\frac{1}{2} + \frac{1}{4} e^{-i 2^{k/2} \arg w}  + \frac{1}{4} e^{i 2^{k/2} \arg w}\right ) \\
  &= 2^{-k} + (1-2^{-k})|w|^{2^{k-1}} \left (\frac{1}{2} + \frac{1}{2} \cos \left ( 2^{k/2} \arg w  \right ) \right )
\end{align}
\end{proof}

\section{Additional Finite-Size Performance Results}
\label{sec:more-sims}
In this section we provide comparisons between QQ and QC receivers using the parameters and methods described in Section~\ref{sec:simulations}. Figures~\ref{fig:q40-full-results}--\ref{fig:q80-full-results} show the full comparisons for \(Q\in\{40,60,80\}\) and \(k\in\{6,8,10,12,14,16\}\). Each panel is labeled by the simplex code parameters \([n,k,d]=[2^k-1,k,2^{k-1}]\).

\begin{figure}[htbp]
  \centering
\includegraphics[width=0.8\textwidth]{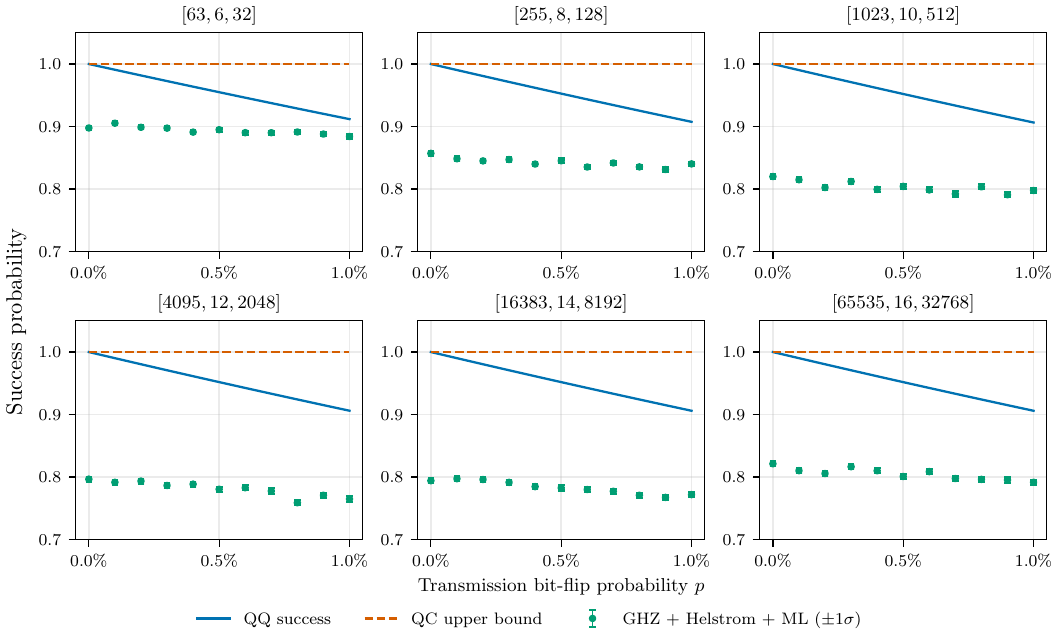}
  \caption{Finite-size decoding performance at total qubit budget \(Q=40\). Each panel shows a different simplex code dimension \(k\), with \(B=1\), \(S=Q-k\), and \(\theta=2\pi/(S2^{k/2})\). The solid blue curve is the exact QQ success probability, and the dashed orange curve is the QC upper bound obtained by numerically inverting Fano's inequality. Green points show the GHZ--Helstrom receiver followed by classical maximum-likelihood decoding, with one-standard-error bars from \(10^4\) trials per point. For all data points, the error bars are smaller than the data marker. The QC bound is one throughout the plotted range.}
  \label{fig:q40-full-results}
\end{figure}

\begin{figure}[htbp]
  \centering
  \includegraphics[width=0.8\textwidth]{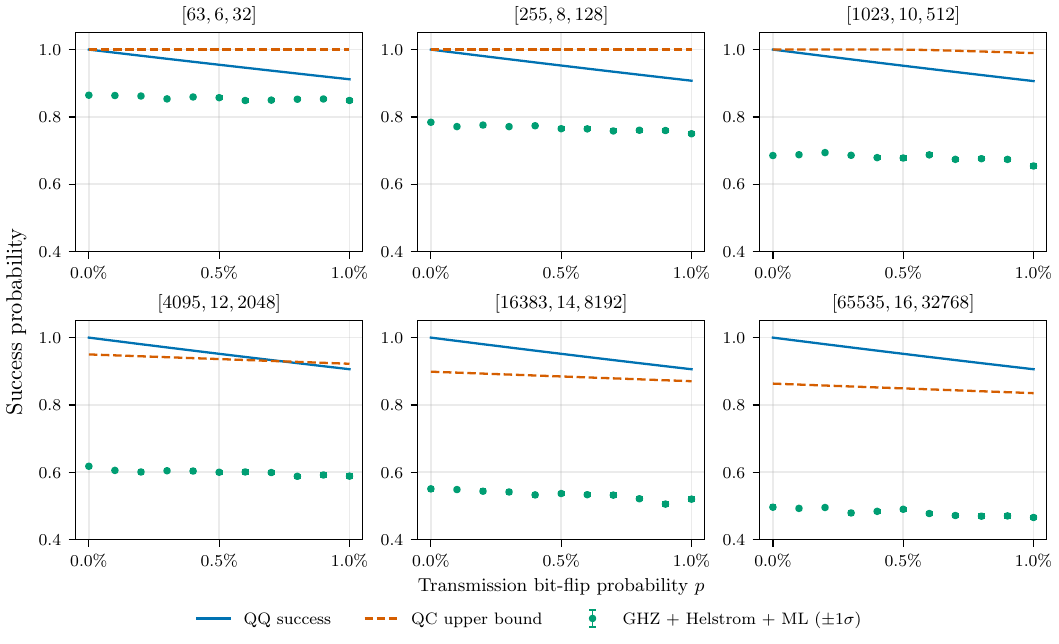}
  \caption{Finite-size decoding performance at total qubit budget \(Q=60\). Each panel shows a different simplex code, with the same receiver parameters, curve conventions, and Monte Carlo error bars as Fig.~\ref{fig:q40-results}.}
  \label{fig:q60-full-results}
\end{figure}

\begin{figure}[htbp]
  \centering
\includegraphics[width=0.8\textwidth]{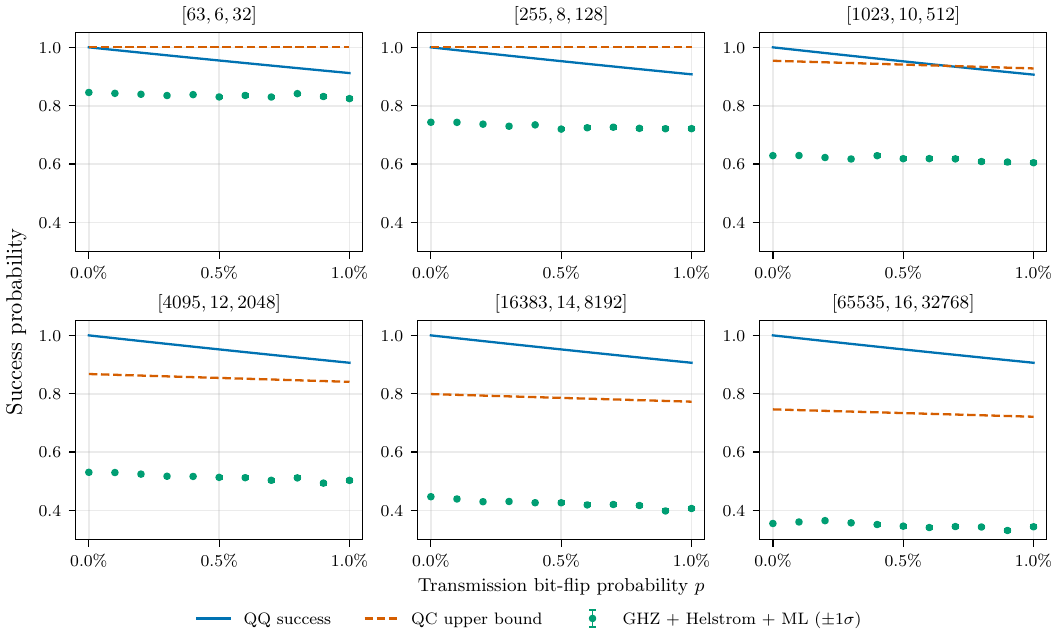}
  \caption{Finite-size decoding performance at \(Q=80\), with the same curve and marker conventions as Fig.~\ref{fig:q40-full-results}. For \(k=12,14,16\), the exact QQ success probability lies above the QC upper bound at every sampled transmission error probability. The \(k=10\) panel also shows a separation at smaller transmission error probabilities.}
  \label{fig:q80-full-results}
\end{figure}
\end{document}